\newcolumntype{L}{>{$}l<{$}} 
\newcolumntype{R}{>{$}r<{$}} 
\newcolumntype{C}{>{$}c<{$}} 
\newcolumntype{x}[1]{>{\centering\arraybackslash\hspace{0pt}}p{#1}} 
\newtheorem{proposition}{Proposition}
\newtheorem{lemma}{Lemma}
\definecolor{mygreen}{RGB}{0,142,0}
\definecolor{shadowed}{gray}{0.9}
\newcommand{\blue}[1]{\textcolor{blue}{#1}}
\newcommand{\red}[1]{\textcolor{red}{#1}}
\newcommand{\green}[1]{\textcolor{mygreen}{#1}}
\DeclareRobustCommand{\vect}[1]{\bm{#1}}
\DeclareMathOperator{\normal}{Normal}
\DeclareMathOperator{\bernoulli}{Bernoulli}
\DeclareMathOperator{\multinomial}{Multinomial}
\DeclareMathOperator{\dirichlet}{Dirichlet}
\DeclareMathOperator{\KL}{KL}
\DeclareMathOperator{\Tr}{Tr}
\DeclareMathOperator{\softmax}{softmax}
\DeclareMathOperator{\enc}{Encoder}
\DeclareMathOperator{\dec}{Decoder}
\newcommand{\bfz}{\mathbf{z}}
\newcommand{\bfA}{\mathbf{A}}
\newcommand{\bfD}{\mathbf{D}}
\newcommand{\bfZ}{\mathbf{Z}}
\newcommand{\bfW}{\mathbf{W}}
\newcommand{\bfP}{\mathbf{P}}
\newcommand{\bfL}{\mathbf{L}}
\newcommand{\bfI}{\mathbf{I}}
\newcommand{\bfU}{\mathbf{U}}
\newcommand{\bfH}{\mathbf{H}}
\newcommand{\bfeta}{\bm{\eta}}
\newcommand{\bfPi}{\bm{\Pi}}
\newcommand{\calm}{\mathcal{M}}
\newcommand{\caln}{\mathcal{N}}
\newcommand{\bbr}{\mathbb{R}}
\DeclareFontFamily{U}{zapfc}{\skewchar \font =45}
\DeclareFontShape{U}{zapfc}{m}{n}{<-> pzcmi}{}
\DeclareMathAlphabet{\mathcalzapf}{U}{zapfc}{m}{n}
\newcommand{\zapfm}{\mathcalzapf{M}}	
\title{The Deep Latent Position Block Model For The Block Clustering And Latent Representation Of Networks}
\author{%
  Rémi~Boutin\\
  Laboratoire de Probabilités, Statistique et Modélisation, CNRS, UMR 8001\\
  Sorbonne Université\\
  Paris, France\\
  \texttt{remi.boutin.stat@gmail.com} \\
  \AND
  Pierre~Latouche\\
  Laboratoire de Mathématiques Blaise Pascal, CNRS, UMR 6620\\
  Université Clermont Auvergne\\
  Aubière, France\\
  \AND
  Charles~Bouveyron\\
  Laboratoire J-A. Dieudonné, CNRS, UMR 7351, INRIA, MAASAI Team\\
  Université Côte d'Azur\\
  Sophia-Antipolis, France\\
}
\begin{document}

\maketitle

\begin{abstract}
	The increased quantity of data has led to a soaring use of networks to model relationships between different objects, represented as nodes. Since the number of nodes can be particularly large, the network information must be summarised through node clustering methods. In order to make the results interpretable, a relevant visualisation of the network is also required. To tackle both issues, we propose a new methodology called deep latent position block model (Deep LPBM) which simultaneously provides a network visualisation coherent with block modelling, allowing a clustering more general than community detection methods, as well as a continuous representation of nodes in a latent space given by partial membership vectors. Our methodology is based on a variational autoencoder strategy, relying on a graph convolutional network, with a specifically designed decoder. The inference is done using both variational and stochastic approximations. In order to efficiently select the number of clusters, we provide a comparison of three model selection criteria. An extensive benchmark as well as an evaluation of the partial memberships are provided. We conclude with an analysis of a French political blogosphere network and a comparison with another methodology to illustrate the insights provided by Deep LPBM results.
\end{abstract}

\section{Introduction and motivation}
Graph-structured data are ubiquitous in many scientific fields such as in social sciences or in Biology. They are able to represent any type of interaction between any kind of objects. With the network sizes increasing, as well as their complexity, it is necessary to develop techniques rendering comprehensive information about their latent structure. Therefore, the aim of this work is twofold. The first necessary step is to estimate node partial memberships to apprehend the connectivity patterns in the network. The second critical step towards having a grasp on the content of a graph is to obtain a meaningful visualisation of the dataset.

\newlength{\HEIGHT}
\newlength{\lowheight}
\begin{figure}
	\centering
	\settoheight{\lowheight}{\vbox{
			\subcaptionbox{$\bfPi$}[0.33\linewidth]{
				$
				\begin{pmatrix}
					0.01 & 0.3 & 0.3 \\
					0.3 & 0.01 & 0.3 \\
					0.3 & 0.3 & 0.01
				\end{pmatrix}
				$
	}}}
	\settoheight{\HEIGHT}{\vbox{
			\subcaptionbox{Deep LPBM}[0.28\linewidth]{
				\includegraphics[width=\linewidth]
				{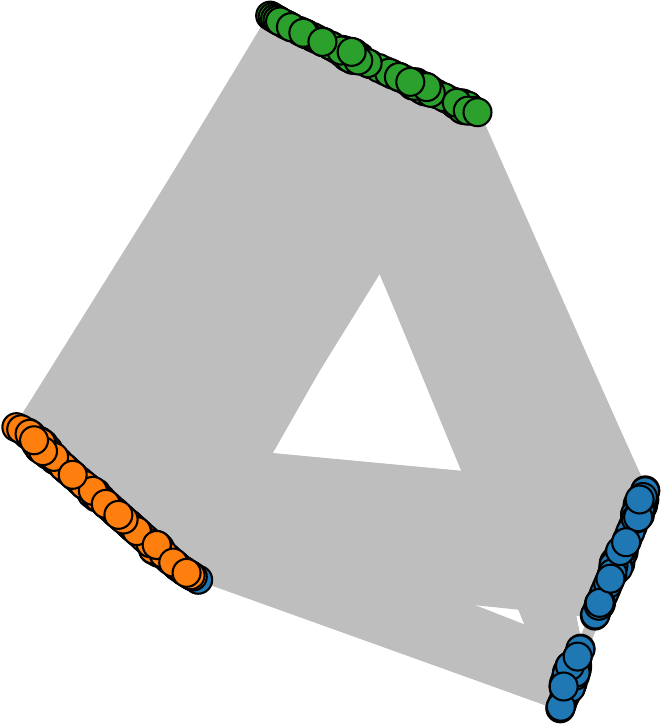}
	}}}
	\subcaptionbox{$\bfPi$}[0.33\linewidth]{
		$
		\begin{pmatrix}
			0.01 & 0.3 & 0.3 \\
			0.3 & 0.01 & 0.3 \\
			0.3 & 0.3 & 0.01
		\end{pmatrix}
		$
		\vspace*{0.35 \HEIGHT - 0.35 \lowheight}
	}
	\hfill
	\subcaptionbox{FR}[0.28\linewidth]{
		\includegraphics[width=\linewidth]
		{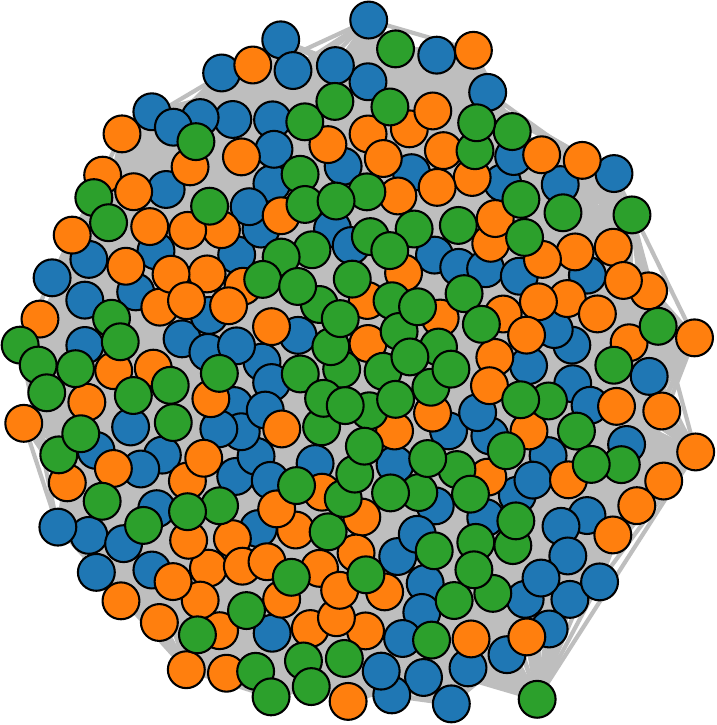}
	}
	\hfill
	\subcaptionbox{Deep LPBM}[0.28\linewidth]{
		\includegraphics[width=\linewidth]
		{cropped-deeplpbm_disassortative_example}
	}
	\caption{For the same disassortative network, Fruchterman-Reingold node layout (FR) provides a uninformative visualisation while Deep LPBM node layout nicely renders the true connectivity patterns induced by the connectivity matrix $\bfPi$, corresponding to the probability of connection between clusters. The node colour corresponds to their corresponding cluster and the probability of connection between clusters are given in the Figure on the left-hand side.}
	\label{fig:example_node_layout_introduction}
\end{figure}

\paragraph{Main contributions}

In this paper, we address several shortcomings of canonical block models, as well as positional methods, by proposing a novel graph variational autoencoder named deep latent position block model (Deep LPBM). This methodology is developed in an unsupervised framework, and no node label is required. Therefore, Deep LPBM focuses on capturing patterns responsible for the observed data and should not be considered a node classification methodology. We summarise the main contributions of this paper below:
\begin{itemize}
	\item  We propose a novel block-structured decoder, called Deep LPBM (deep latent position block model), combined with a graph convolutional network (GCN) based encoder, to model any type of connectivity pattern.
	\item Deep LPBM generalises many random graph models for network analysis
	\item By using partial memberships, Deep LPBM is able to associate each node with several connectivity patterns, rendering refined results as illustrated in the analysis of the French political blogosphere.
	\item To the best of our knowledge, Deep LPBM is the first method capable of simultaneously \textit{i)} using a variational graph autoencoder algorithm, \textit{ii)} providing a visualisation of the entire network compatible with block modelling and \textit{iii)} performing block modelling, as well as node partial membership estimation.
\end{itemize}

\begin{figure}
	\centering
	\includegraphics[width=0.3 \linewidth]{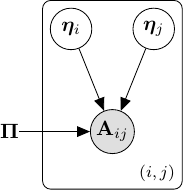}
	\caption{Graphical model associated to Deep LPBM. $\bfA_{ij}$ corresponds to the observed data (in a shadowed circle), $\bfeta_i$ and $\bfeta_i$ to the node latent representations (in an empty circle) and $\bfPi$ to the parameters (not in a circle).}
	\label{fig:graphical_model}
\end{figure}

\section{Model}
\label{sec:model}
This section presents the Deep LPBM modelling assumptions, as well as the links with other random graph models for networks. 

\paragraph{Notations and data}
We start by describing the data considered in this work and  the notations used in this paper.

First, the present methodology is interested in undirected graph-structured data, denoted $\mathcal{G} = \{\mathcal{V}, \mathcal{E}\}$, where $ \mathcal{V} = \{1, \dots, N\}$ corresponds to the set of vertices with cardinal $N$, and $\mathcal{E}=\{ (i,j)\in \mathcal{V}^2: \text{$i$ and $j$ are connected} \}$ to the set edges with cardinal $M$. The adjacency matrix $\bfA = (\bfA_{ij})_{1\leq i,j \leq N}$ is a $N \times N$ binary matrix such that $\bfA_{ij} = 1$ if $i$ and $j$ are connected, $0$ otherwise. Since the graph is assumed to be undirected, the adjacency matrix is symmetric. Second, the number of node clusters will be denoted $Q$. The bijective softmax function is a bijective mapping between $\mathbb{R}^{d-1}$ and the $d$-dimensional simplex $\Delta_d$, such that for any $x \in \mathbb{R}^d$ and any $y \in \Delta_d$:
\begin{align}
	&\softmax(x)  = \Bigl(1 + \sum_{k'=1}^{Q-1} e^{x_{k'}}\Bigr)^{-1} ( e^{x_1}, \dots, e^{x_{Q-1}}, 1), \nonumber\\
	&\softmax^{-1}(y)  = \left( \ln\left(\frac{y_1}{y_Q}\right), \dots,   \ln \left( \frac{y_{Q-1}}{y_Q} \right) \right). \label{eq:def_softmax_inv}
\end{align}

\paragraph{Generative model}
First, we assume a vector of probabilities is assigned to each node $i$, modelling the partial memberships of the corresponding node, modelled by a logistic-normal distribution. Setting $d=Q-1$, we consider:
\begin{equation}\label{eq:eta_logisticnorm_prior}
	\begin{split}
		\bfz_{\bm{i}} & \overset{i.i.d}{\sim} \normal_d(0, \bfI_{d}),\\
		\bfeta_{\bm{i}} & \overset{\phantom{i.i.d}}{=} \softmax(\bfz_{\bm{i}}).
	\end{split}	
\end{equation}
This assumption illustrates the possibility for each node to partially belong to multiple clusters. Hence, each vector $\bfeta_{i}^{\top}=( \eta_{i1} \dots \eta_{iQ})$ corresponds to node $i$ partial memberships such that the proportion of node $i$ associated to cluster $q$ is given by $\eta_{iq}$.

Second, given the partial memberships $(\bfeta_i)_{i=1,\dots,N}$, denoted $\bfeta$, the edges are assumed to be independent and to occur according to the following distribution:
\begin{equation}\label{eq:bernouilli}
	\bfA_{ij}	\mid \bfeta_{\bm{i}}, \bfeta_{\bm{j}} \overset{i.i.d}{\sim} \bernoulli( \bfeta_{\bm{i}}^{\top} \bfPi \bfeta_{\bm{j}}).
\end{equation}
The $Q \times Q$ matrix $\bfPi = (\bfPi_{qr})_{1\leq q,r\leq Q}$ is symmetric, with entries $\bfPi_{qr} \in [0, 1]$ corresponding to connection probabilities.  Therefore, an edge between $i$ and $j$ exists with the following probability:
\begin{equation}
	p(A_{ij} = 1 \mid \bfeta_{\bm{i}}, \bfeta_{\bm{j}}, \bfPi)~=~\bfeta_{i}^{\top} \bfPi \bfeta_j.
\end{equation}

\section{Related work and link with Deep LPBM}
This section presents works related to block modelling and positional modelling as well as their links with Deep LPBM.

\subsection{Block modelling}
The stochastic block model \parencite[SBM,][]{wang1987stochastic, snijders1997estimation, daudin2008mixture} considers $\bfeta_i$ as a binary variable $\bfeta_{\bm{i}} \sim \multinomial_{Q}(1; \alpha)$, encoding the cluster of node $i$. Given the clusters of nodes $i$ and $j$ are $q$ and $r$, corresponding to $\eta_{iq} = \eta_{jr} = 1$, the probability of connection in \Cref{eq:bernouilli} would result in:
\begin{equation*}
	p(A_{ji}~=~1 \mid \bfeta_{\bm{i}}, \bfeta_{q}, \bfPi)~=~ \bfeta_{\bm{i}}^{\top} \bfPi \bfeta_{\bm{j}}~=~\Pi_{qr}.
\end{equation*}
Deep LPBM, by using a logistic-normal prior on $\bfeta_{\bm{i}}$, relax the binary constraint on the cluster membership variable $\bfeta_{\bm{i}}$. This type of relaxation has been studied in the context of exponential models by \cite{heller2008statistical}, with factorisable distributions, which is not the case here due to the dependence on both $i$ and $j$ in \Cref{eq:bernouilli}. SBM has been extended to incorporate mixed-membership in  \cite[MMSBM,][]{airoldi2008mixed}. In this model, each node $i$ plays a specific role with respect to a corresponding edge. For instance, considering the edge between $i$ and $j$, the role of each node is modelled by a membership indicator $U_{ij} \sim \multinomial_{Q}(1; \eta_{i} )$ for the role of $i$ and $U_{ji} \sim \multinomial_{Q}(1; \eta_{j})$ for the role of $j$, with the vector of probabilities $\eta_i$ sampled according to $\eta_i \sim \dirichlet_{Q}(\gamma)$. While the authors aimed at considering the quantity $p(\bfA \mid \bm{U}, \bfPi)$, where $\bm{U}$ is the set of all latent vectors $U_{ij}$, the marginal quantity $p(\bfA_{ij} = 1 \mid \bm{\eta}_i, \bm{\eta}_j, \bfPi)$ gives the following probability of connection for any $i < j$:
\begin{equation*}
	\begin{split}
		p(\bfA_{ij} = 1 \mid \bm{\eta}_i, \bm{\eta}_j, \bfPi) = \bfeta_i^{\top} \bfPi \bfeta_j.
	\end{split}
\end{equation*}
We retrieve the same probability of connection as in Equation \eqref{eq:bernouilli}. However, MMSBM assumes a Dirichlet a priori distribution on the partial membership vectors while we suppose a logistic-normal distribution. Additionally, MMSBM relies on a different inference strategy, while, as we shall detail below, we rely on a variational EM algorithm to incorporate graph neural networks and leverage their powerful encoding capacity. In the inference of Deep LPBM, the set $\bm{U}$ is never considered which strongly reduces the number of latent variables to handle and to estimate.

\subsection{Positional models and links with block modelling}
The latent position model \parencite[LPM,][]{hoff2002latent} is considered as the seminal work regarding positional models. LPM assumes that each vertex $i$ is represented by a point in a Euclidean latent space denoted $\bfeta_i \in \mathbb{R}^d$. Given the vertex positions $\bfeta_i$ and $\bfeta_j$ of nodes $i$ and $j$ respectively, an edge between the two exists with probability $f(\bfeta_i, \bfeta_i)$, where $f$ corresponds to a \textit{link function}, also named \textit{kernel function}. For instance, in \cite{hoff2002latent}, the authors considered $f(\bfeta_i, \bfeta_i) = \|\bfeta_i - \bfeta_j \|_2$ the Euclidean distance. Conditional on the node positions, the edges are assumed to be independent. For undirected graphs, and in the absence of covariates, the scalar product is useful in practice since it is fast to compute and it sets nodes sharing close connectivity patterns along similar directions. It was extended in \cite{handcock2007model} by incorporating clustering into the modelling. Lately, developments regarding deep generative models, introduced variational graph autoencoder \parencite[VGAE,][]{kipf2016variational}, focused on encoding the latent structure with a graph convolutional network \parencite[GCN,][]{kipf2016semi}, with a decoder, or link function, based on the dot product $f(\bfeta_i, \bfeta_i) = \bfeta_i^{\top} \bfeta_j$. This constrains the model to respect the transitivity property, \textit{the friend of my friend is my friend} effect \parencite{newman2002assortative}, and limits the connectivity patterns that can be detected. A star pattern, common in social networks, necessitates a more general approach such as block modelling. 

A few works have aimed at bridging the gap between block modelling and positional modelling. For instance, the latent variable model of relational data \parencite{hoff2007modeling} links the probability of connection between two nodes with $\eta_i^{\top} \bfPi \eta_i$ through a probit function. However, contrary to the proposed model, $\eta_i$ is a vector of free parameters in $\bbr^{Q}$ and $\bfPi \in \calm_{Q \times Q}(\bbr)$ is a diagonal matrix with entries that may be positives or negatives. First, Deep LPBM does not assume a specific form of the matrix $\bfPi$ but constrained its values between $0$ and $1$ to ease the interpretation. Second, this model does not assume a generative assumption for each variable nor introduce a node cluster membership variable allowing to model the inter-cluster connectivity. 

To overcome the first limitation raised above, \cite{daudin2010model} proposed the extremal vertices model for random graph (EVMRG). It also relies on the marginalisation of MMSBM, but considers $\bfeta$ as a parameter and not as a random variable, preventing from incorporating the noise within the partial memberships. Moreover, EVRMG inference is based on a linear approximation of the log likelihood, preventing from using an autoencoding framework as well as graph neural network representational power, as we shall detail bellow.

The generalised random dot product graph \parencite[GRDPG,][]{rubin2022statistical} relies on continuous node representation, and is general enough so that it  incorporates block models such as SBM and MMSBM. Under this model, each node $i$, is assigned a vector $X_i$, with conditions on $X_i$ such that for any $i < j$, $\sum_{q=1}^p X_{iq} X_{jq}~-~\sum_{k=p+1}^{p+d} X_{ik} X_{jk} \in [0,1]$. Hence,
\[
\bfA_{ij}~\sim~\operatorname{Bernoulli} \Bigl( \sum_{q=1}^p X_{iq} X_{jq}~-~\sum_{k=p+1}^{p+d} X_{ik} X_{jk} \Bigr).
\]
As noted in \cite{rubin2022statistical}, denoting $p$ and $q$ the number of strictly positive and strictly negative eigenvalues of $\bfPi$, and put $d=p + q$, by choosing $v_1, \dots, v_Q \in \mathbb{R}^d$ such that $v_q^{\top} \mathbf{I}_{p,q} v_{r} = \bfPi_{qr}$, imposing that $X_i = \sum_{q=1}^Q \eta_{iq} v_q$, the probability of existence of an edge becomes $X_i^{\top} \mathbf{I}_{p,q} X_j =  \eta_i^{\top} \bfPi \eta_{j}$. We retrieve Deep LPBM  probability which indicates that Deep LPBM is a special case of the GRDPG. Therefore, insights from GRDPG, notably on the identifiability, are relevant to the Deep LPBM methodology. Nonetheless, the differences in the corresponding generative models and in the inference are key in Deep LPBM to benefit from the efficiency and flexibility of a variational graph autoencoding framework as described in the next section.

\section{Inference}\label{sec:inference_optim}
To estimate the value of the connectivity matrix $\bfPi$, we aim at computing the marginal log-likelihood of the data:
\begin{equation}
	\begin{split}
		\ln p(\bfA \mid \bfPi) = \ln \int_{\bfZ}  p(\bfA, \bfZ \mid \bfPi) d \bfZ,
	\end{split}
\end{equation}
where $\bfZ$ denotes the set all latent vectors $\bfz_i$, for all $i$ in $\{1, \dots, N\}$. Unfortunately, this quantity is not tractable because of the softmax function. In addition, an expectation-maximisation (EM) algorithm cannot be employed directly since the posterior distribution $p(\bfZ \mid \bfA, \bfPi)$ is not tractable. Indeed, $p(\bfZ \mid \bfA, \bfPi) \neq \prod_{i=1}^N p(\bfz_{\bm{i}} \mid \bfA, \bfPi)$ and $p(\bfz_{\bm{i}} \mid \bfA, \bfPi)$ depends on the entire adjacency matrix $\bfA$, preventing from computing this quantity. This problem arises because of the nature of the graphical model, with the two-to-one relationship between latent vectors and observed data (Figure \ref{fig:graphical_model}). It is at the core of all SBM based strategies \parencite{daudin2008mixture}. Consequently, we rely on a variational EM algorithm \parencite{wainwright2008graphical} to tackle this issue.

\paragraph{Variational EM algorithm} 
The variational inference algorithm introduces $R(\bfZ)$, the variational distribution, which serves as a surrogate of the posterior distribution $p(\bfZ \mid A, \bfPi)$. Doing so permits to decompose the marginal log-likelihood for any distribution $R(\bfZ)$:
\begin{equation*} 
	\begin{split}
		\ln p(\bfA \mid \bfPi) = \mathscr{L}(\bfPi; R) + \KL( R(\bfZ) \mid \mid p(\bfZ \mid \bfA, \bfPi)),
	\end{split}
\end{equation*}
with on the left-hand side $\mathscr{L}(\bfPi; R)$, the expected lower bound (ELBO) defined in \Cref{eq:ELBO} and on the right-hand side, the Kullback-Leibler divergence between the two distributions. It is worth noticing that the Kullback-Leibler is always non-negative and thus, the ELBO is a lower bound of the marginal log-likelihood. Moreover, the ELBO is defined as:
\begin{equation}\label{eq:ELBO}
	\begin{split}
		\mathscr{L}(\bfPi; R) & = \mathbb{E}_{R(\bfZ)} \left[ \ln \frac{p(\bfA, \bfZ \mid \bfPi)}{R(\bfZ)}\right].
	\end{split}
\end{equation}
Let us remark that the closer $R(\bfZ)$ is to $p(\bfZ \mid \bfA, \bfPi)$ in terms of Kullback-Leibler divergence, the tighter the lower bound is. To obtain explicit expressions, it is necessary to restrict the family of considered variational distributions by making assumptions described in the following section.

\paragraph{Assumptions on the variational distribution} 
First, we assume that the variational distribution respects the mean-field hypothesis, also referred to as the total factorisation assumption. Second, we assume a specific parametrisation of the distribution to encode the data using a graph neural network such that:
\begin{equation}
	R_{\phi}(\bfZ)~=~\prod_{i=1}^N \caln_{d} (\bfz_i; \mu_{\phi}(\bfA)_i, \sigma_{\phi}(\bfA)_i^2 \bfI_{d}),
\end{equation}
with $\mu_{\phi}(\bfA)$ (respectively $\sigma_{\phi}(\bfA)^2$) corresponding to the variational means (resp. variances) encoded by the canonical VGAE encoder \parencite{kipf2016variational} defined as $\bfH_1 = \operatorname{ReLu}(\tilde{\bfL} \bfW_0)$, $\mu_{\phi}(\bfA) = \tilde{\bfL} \bfH_1  \bfW_{\mu}$ and $\log \sigma_{\phi}(\bfA)  = \tilde{\bfL} \bfH_1  \bfW_{\sigma}$, with $\tilde{\bfA} = \bfA + \bfI_{N}$,  $\tilde{\bfD}$ the diagonal matrix with $\tilde{\bfD}_{ii}~=~\sum_{j=1}^N \tilde{\bfA}_{ij}$ for any node $i$, and $\tilde{\bfL} = \tilde{\bfD}^{-1/2} \tilde{\bfA}  \tilde{\bfD}^{-1/2}$ which serves as a renormalisation trick to avoid numerical instability \parencite{kingma2014adam}. Moreover, the matrices $\bfW_0\in \mathcal{M}_{N \times d}(\mathbb{R})$ and $ \bfW_{\mu}, \bfW_{\sigma} \in \mathcal{M}_{d \times (Q-1)}(\mathbb{R}) $ correspond to the parameters to estimate, and $\bfH_1$ to the shared hidden layer. We denote the set of the VGAE parameters $\phi = \{\bfW_{0}, \bfW_{\mu}, \bfW_{\sigma}\}$ and the dimension of the latent space is set to $d=30$ in all our experiments. This encoder outputs the $(Q-1)$-dimensional mean vectors $(\mu_{\phi}(\bfA)_i)_i$ as well as the scalar log variances $(\ln \sigma_{\phi}(\bfA)_i)_i$. While other architectures might be of interest, in particular by using higher order neighbours, we choose to rely on the canonical GCN as it also recovers high order neighbour patterns \parencite{platonov2022critical} and limits the number of parameters. Eventually, the ELBO can be decomposed and computed such that:
\begin{equation}\label{eq:ELBO_detailed}
	\begin{split}
		\mathscr{L}(\bfPi; R_{\phi}) = & \sum_{j < i}  \mathbb{E}_{R_{\phi}(\bfZ)} \left[ \ln p(A_{ij} \mid \bm{\eta}_i, \bm{\eta}_j, \bfPi) \right]\\
		&  - \sum_{i=1}^N \KL\left(R(\bfZ_i) \mid p(\bfZ_i) \right).
	\end{split}
\end{equation}

\subsection{Identifiability}
In this section, we are interested in the identifiability of Deep LPBM. We consider a simplified version of the model by considering $\bfeta$ as a parameter, with $\bfeta$ the $N \times Q$ matrix with row $i$ corresponding to $\bfeta_i$ to get a better understanding of the issues that might be encountered. 

\begin{proposition}[\cite{daudin2010model}]\label{theorem:identifiability}
	Let $\bfeta$ be a $N \times Q$ matrix and $\bfPi$ a $Q \times Q$ such that: 
	\begin{itemize}
		\item[(H1)] $\bfeta = (\bfeta_1 \dots \bfeta_N)^{\top}$, where $\bfeta_i \in \Delta_{Q}$ for any $i \in \{1, \dots, N\}$.
		\item[(H2)] $\bfPi=(\bfPi_{qr})_{1 \leq q,r \leq Q}$ with $\bfPi_{qr} \in [0, 1]$.
	\end{itemize}
	Then, there exist $\tilde{\bfeta}$ and $\tilde{\bfPi}$ respecting (H1) and (H2) such that $ (\tilde{\bfeta}, \tilde{\bfPi}) \neq (\bfeta, \bfPi)$ and:
	\begin{equation}\label{eq:identifiability}
		\bfP = \tilde{\bfeta} \tilde{\bfPi} \tilde{\bfeta}^{\top} = \bfeta \bfPi \bfeta^{\top}.
	\end{equation}
	Therefore, the model is not identifiable. 
\end{proposition}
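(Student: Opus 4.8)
I would build $(\tilde{\bfeta},\tilde{\bfPi})$ by a linear reparametrisation of the factorisation $\bfP=\bfeta\bfPi\bfeta^{\top}$. The underlying principle is: if $M$ is an invertible $Q\times Q$ matrix with $M\mathbf{1}_{Q}=\mathbf{1}_{Q}$, then setting $\tilde{\bfeta}:=\bfeta M$ and $\tilde{\bfPi}:=M^{-1}\bfPi M^{-\top}$ gives $\tilde{\bfeta}\tilde{\bfPi}\tilde{\bfeta}^{\top}=\bfeta\bfPi\bfeta^{\top}=\bfP$; moreover $\tilde{\bfPi}$ is automatically symmetric, and every row of $\tilde{\bfeta}$ still sums to one because $(\bfeta M)\mathbf{1}_{Q}=\bfeta(M\mathbf{1}_{Q})=\bfeta\mathbf{1}_{Q}=\mathbf{1}_{N}$. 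The equality constraints of (H1)--(H2) are thus handled for free; the only delicate point — and the main obstacle — is to pick $M\neq\bfI_{Q}$ for which the \emph{inequality} parts of (H1)--(H2) (nonnegativity of $\tilde{\bfeta}$, and $\tilde{\bfPi}_{qr}\in[0,1]$) still hold. For this reason I would first use the one family of $M$'s for which these inequalities cost nothing: permutation matrices.

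So the first step is to take $M=P$ a permutation matrix (a relabelling of the $Q$ clusters). Then $P^{-1}=P^{\top}$ is again a permutation matrix, so $\tilde{\bfeta}=\bfeta P$ is row-stochastic and $\tilde{\bfPi}=P^{\top}\bfPi P$ just has its entries reshuffled, hence lies in $[0,1]$; and $\tilde{\bfeta}\tilde{\bfPi}\tilde{\bfeta}^{\top}=\bfeta(PP^{\top})\bfPi(PP^{\top})\bfeta^{\top}=\bfP$. It then remains to check that \emph{some} transposition genuinely moves the pair. A short verification shows: $\bfeta P=\bfeta$ for every permutation $P$ iff all columns of $\bfeta$ coincide, i.e.\ every $\bfeta_{i}$ equals the uniform vector $\tfrac1Q\mathbf{1}_{Q}$; and $P^{\top}\bfPi P=\bfPi$ for every permutation iff $\bfPi=\alpha\bfI_{Q}+\beta\,\mathbf{1}_{Q}\mathbf{1}_{Q}^{\top}$ for some scalars $\alpha,\beta$ (a symmetric matrix invariant under all transpositions depends only on whether its two indices agree). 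Consequently, whenever $\bfeta$ is not the uniform matrix \emph{or} $\bfPi$ is not of that special form, a single transposition yields a valid $(\tilde{\bfeta},\tilde{\bfPi})\neq(\bfeta,\bfPi)$ with the same $\bfP$, which settles all such cases.

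It remains to treat the exchangeable configuration $\bfeta_{i}\equiv\tfrac1Q\mathbf{1}_{Q}$ and $\bfPi=\alpha\bfI_{Q}+\beta\,\mathbf{1}_{Q}\mathbf{1}_{Q}^{\top}$, where no relabelling helps. Here the factorisation degenerates: since $\bfeta\mathbf{1}_{Q}=\mathbf{1}_{N}$, a direct computation gives $\bfP=\bfeta\bfPi\bfeta^{\top}=c\,\mathbf{1}_{N}\mathbf{1}_{N}^{\top}$ with $c=\alpha/Q+\beta$, and $c\in[0,1]$ follows from $\beta\in[0,1]$ and $\alpha+\beta\in[0,1]$. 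If $\alpha\neq 0$, keep $\tilde{\bfeta}=\bfeta$ and take $\tilde{\bfPi}=c\,\mathbf{1}_{Q}\mathbf{1}_{Q}^{\top}$: it obeys (H2), differs from $\bfPi$, and $\tilde{\bfeta}\tilde{\bfPi}\tilde{\bfeta}^{\top}=c(\bfeta\mathbf{1}_{Q})(\bfeta\mathbf{1}_{Q})^{\top}=c\,\mathbf{1}_{N}\mathbf{1}_{N}^{\top}=\bfP$. If $\alpha=0$, then $\bfPi=\beta\,\mathbf{1}_{Q}\mathbf{1}_{Q}^{\top}$ and, by the same identity $\bfeta\mathbf{1}_{Q}=\mathbf{1}_{N}$, $\bfeta\bfPi\bfeta^{\top}=\beta\,\mathbf{1}_{N}\mathbf{1}_{N}^{\top}$ for \emph{every} row-stochastic $\bfeta$; so keep $\tilde{\bfPi}=\bfPi$ and take any non-uniform $\tilde{\bfeta}\in\Delta_{Q}^{N}$ (which exists since $Q\geq 2$). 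In every case we obtain $(\tilde{\bfeta},\tilde{\bfPi})\neq(\bfeta,\bfPi)$ satisfying (H1)--(H2) with the same $\bfP$.

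As an aside, the ambiguity is not merely discrete: with $M_{\varepsilon}=(1-\varepsilon)\bfI_{Q}+\tfrac{\varepsilon}{Q}\mathbf{1}_{Q}\mathbf{1}_{Q}^{\top}$ (a row-stochastic contraction toward the barycentre, invertible for $\varepsilon\in(0,1)$ and satisfying $M_{\varepsilon}\mathbf{1}_{Q}=\mathbf{1}_{Q}$), $\tilde{\bfeta}=\bfeta M_{\varepsilon}$ is always admissible and $\tilde{\bfPi}=M_{\varepsilon}^{-1}\bfPi M_{\varepsilon}^{-\top}$ stays in $[0,1]^{Q\times Q}$ for small $\varepsilon$ whenever $\bfPi$ lies in the interior, giving a one-parameter family of equivalent pairs. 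The crux, again, is that the reparametrisation group $\{M:M\mathbf{1}_{Q}=\mathbf{1}_{Q}\}$ preserves the algebraic identity effortlessly, while the interval constraint $\bfPi_{qr}\in[0,1]$ is what forces one to move carefully inside it — hence the strategy of exhausting the permutation moves first and handling only the genuinely symmetric configurations by the explicit collapse above.
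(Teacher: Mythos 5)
Your argument is correct, and its skeleton is exactly the paper's: the identity ``$M$ invertible, $M\mathbf{1}_Q=\mathbf{1}_Q$ $\Rightarrow$ $\tilde{\bfeta}=\bfeta M$, $\tilde{\bfPi}=M^{-1}\bfPi M^{-\top}$ preserve $\bfP$ and the equality constraints'' is precisely Lemma~\ref{lemma:H_matrix_identifiability} (conditions (A1)--(A4)) in the appendix. Where you diverge is in what you do next. The paper stops there and defers the construction of a concrete non-trivial $\bfH$ to \cite{daudin2010model}; you instead make the proof self-contained by exhibiting explicit choices: permutation matrices in the generic case, plus a separate treatment of the exchangeable configuration ($\bfeta_i\equiv\tfrac1Q\mathbf{1}_Q$, $\bfPi=\alpha\bfI_Q+\beta\mathbf{1}_Q\mathbf{1}_Q^{\top}$) where no relabelling moves the pair --- and that case analysis is complete and each computation checks out (including $c=\alpha/Q+\beta\in[0,1]$ as a convex combination of the diagonal and off-diagonal entries). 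The trade-off: your proof needs no external reference and literally establishes the stated claim, but in the generic case it only exhibits \emph{label-switching}, the trivial non-identifiability shared by every clustering model. The construction the paper points to is stronger in spirit --- as the remark following the proposition indicates, the $\bfH$ of \cite{daudin2010model} changes the variance of the entries of $\tilde{\bfPi}$ and the quantity $\Tr(\tilde{\bfeta}^{\top}\tilde{\bfeta})$, so it produces equivalent parametrisations that are \emph{not} related by any relabelling of the clusters, which is the genuinely problematic ambiguity. Your closing aside with $M_{\varepsilon}=(1-\varepsilon)\bfI_Q+\tfrac{\varepsilon}{Q}\mathbf{1}_Q\mathbf{1}_Q^{\top}$ is the right gesture toward that stronger statement (it is essentially a one-parameter version of the Daudin et al.\ construction, and it does shrink $\Tr(\tilde{\bfeta}^{\top}\tilde{\bfeta})$), but as you note it only works when $\bfPi$ lies in the interior of $[0,1]^{Q\times Q}$, so it would need that hypothesis to replace the permutation argument as the main proof.
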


\begin{proof}
	Lemma 1 in the appendix gives sufficient conditions on a matrix $\bfH$ for $\tilde{\bfeta} = \bfeta \bfH$ and $\tilde{\bfPi} = \bfH^{-1} \bfPi (\bfH^{\top})^{-1}$ to respect \Cref{eq:identifiability}. Moreover, 
	the construction of such a matrix $\bfH$ is proposed in \cite{daudin2010model}. 
\end{proof}

In particular, the proposed construction of $\bfH$ modifies the variance of $\tilde{\bfPi}$ and the quantity $\Tr(\tilde{\bfeta}^{\top} \tilde{\bfeta})$. Another example from \cite{rubin2022statistical} shows that it is possible to construct a hyperbolic transformation of the partial memberships that may leave the matrix $\bfP$ unchanged. Concerning Deep LPBM, no estimation issue has been encountered in practice.

\subsection{Optimisation of the decoder}
The parameter $\bfPi$ and the variational parameters $\phi$ are updated using a stochastic gradient descent algorithm based on the reparametrisation trick. To use a gradient descent algorithm, the constrained $(\bfPi_{qr})_{q,r}$ in $]0,1[$ are mapped into the unconstrained set $\mathbb{R}$ using the following $f$ function $f(x) = 0.5 + \pi^{-1}\arctan(x)$ for any $x \in ]0,1[$. Hence, $f$ is a bijective mapping and denoting $\bfPi = (f(\tilde{\bfPi}_{qr}))_{1 \leq q,r \leq Q}$, we can optimise the parameters with respect to $\tilde{\bfPi}=(\tilde{\bfPi}_{qr})_{1\leq q,r \leq Q}$, a $Q \times Q$ unconstrained real matrix, using a gradient descent algorithm. For the sake of clarity, we denote the transformation $\bfPi = f(\tilde{\bfPi})$ the element-wise mapping of $\tilde{\bfPi}$ by $f$. In all our experiments, we used the Adam optimiser \parencite{kingma2014adam} with a learning rate set to $0.01$. Eventually, the model is estimated with $10$ different seeds and the one corresponding to the highest ELBO is kept as a result. The encoder is optimised using the reparametrisation \parencite{kingma2014autoencoding, rezende2014stochastic} and the entire procedure is shown in \Cref{algo:optim1}.

\newcommand\mycommfont[1]{\footnotesize\ttfamily\textcolor{blue}{#1}}
\SetCommentSty{mycommfont}
\begin{algorithm}
	\KwIn{$C^{\textbf{KMeans}}$ labels provided by a KMeans on $\bfA$\; $\bfZ^0 = f^{-1}(C^{\textbf{KMeans}})$\;}
	\tcc{Initialisation of the GNN}
	\For{epoch $\in \{1, \dots, \textrm{max iter$_{init}$}\}$}{
		$\vect{\mu_{\phi}}, \vect{\sigma_{\phi}}  \gets \enc(\bfA; \phi$)\;
		$\ell(\vect{\mu_{\phi}}, \vect{\sigma_{\phi}},  \bfZ^{0}) \gets \frac{1}{N} \sum_{i=1}^N \| \vect{\mu_{\phi, i}}- \vect{z}_i^0 \|_2^2  +
		\|\sigma_{\phi, i}^2 - 0.01 \|_2^2 $ \;
		Stochastic gradient descent on $\ell(\vect{\mu_{\phi}}, \vect{\sigma_{\phi}}, \bfZ^{0})$ with respect to $\phi$\;
	}
	\tcc{Estimation of Deep LPBM}
	\For{epoch $\in \{1, \dots, \text{max iter}\}$}{
		$\vect{\mu_{\phi}}, \vect{\sigma_{\phi}}  \gets \enc(\bfA; \phi$)\;
		$\bfZ \gets \vect{\mu_{\phi}} \oplus (\vect{\sigma_{\phi}} \odot \epsilon)$\;
		$\bfPi \gets f(\tilde{\bfPi})$\;
		$\hat{\vect{P}} \gets \dec(\bfZ; \bfPi)$\;	
		$\ell(\tilde{\bfPi}; \phi) \gets $ Using $\hat{\vect{P}}, \bfZ, \vect{\mu_{\phi}}$ and $ \vect{\sigma_{\phi}}$ in Equation \eqref{eq:ELBO}\;
		Stochastic gradient descent on $\ell(\tilde{\bfPi}; \phi)$ with respect to $\phi$ and $\tilde{\bfPi}$\;
	}
	\caption{Inference of the model parameters}
	\label{algo:optim1}
\end{algorithm}

\subsection{Model Selection}\label{sec:model_selection}
This section focuses on estimating the best number of clusters. To this aim, we compare the performance of Akaike's information criterion \parencite[AIC,][]{akaike1974new}, the Bayesian information criterion \parencite[BIC,][]{schwarz1978estimating} and the integrated classification likelihood \parencite[ICL,][]{biernacki1998assessing} to select the right number of clusters in Deep LPBM. As in \cite{daudin2010model}, $\bfeta$ is fixed and considered as a parameter. Hence, denoting $\zapfm$ the generative model, and $Q$ the fixed number of clusters considered, the three criteria can be computed as:
\begin{equation*}
	\begin{split}
		\operatorname{AIC}(Q, \zapfm) &= \ln p(\bfA \mid \widehat{\bfZ}, \widehat{\bfPi}) - \nu_{N,Q},\\
		\operatorname{BIC}(Q, \zapfm) &= \ln p(\bfA \mid \widehat{\bfZ}, \widehat{\bfPi}) - \frac{\nu_{N,Q}}{2} \ln \left( N_{obs} \right),\\
		\operatorname{ICL}(Q, \zapfm) &= \ln p(\bfA, \widehat{\bfZ} \mid \widehat{\bfZ}, \widehat{\bfPi}) - \frac{\nu_{N, Q, \bfPi}}{2} \ln \left( N_{obs} \right),\\
	\end{split}
\end{equation*}
where $\nu_{N,Q} = Q \left(Q+1\right) + N \left(Q-1\right)$ is the number of free parameters in the model, $\nu_{N, Q, \bfPi}~=~0.5 \ Q \left(Q+1\right)$ the number of free parameters in $\bfPi$,  and $N_{obs}~=~0.5 \ N \left(N-1\right)$ the number of observations. We point out that $\ln p(\bfZ)$, in the ICL, has an explicit form and does not depend on any parameters. The evaluation of these criterion is provided in \cref{sec:model_selection_evaluation}.

\section{Experiments on synthetic data}\label{sec:benchark}
This section aims at evaluating the efficiency of the proposed methodology. All the experiments were performed on a GPU NVIDIA P40 24Go. We are grateful to the Mésocentre Clermont-Auvergne of the Université Clermont Auvergne
for providing help, computing and storage resources.\footnote{Our code is available at: \url{https://anonymous.4open.science/r/deep_lpbm_package-1CCF/}} 

\subsection{Simulation settings}\label{sec:experiment_setting_sbm}
We start this section by presenting the settings and the underlying network structures responsible for our synthetic datasets. To appraise the efficiency of Deep LPBM, we sample undirected graphs made of $200$ and $500$ nodes and with $5$ clusters, described below and displayed in the appendix:
\begin{itemize}
	\item \textbf{Communities:} the probability of connection between nodes from the same cluster, denoted $\beta$, is higher than the probability of connection between nodes from different clusters, denoted $\varepsilon$
	\item \textbf{Disassortative:} the probability of connection between nodes from the same cluster, denoted $\varepsilon$, is lower than the probability of connection between nodes from different clusters, denoted $\beta$
	\item \textbf{Hub:} one of the clusters is highly connected to all the clusters, with a probability $\beta$, the other clusters are communities.
\end{itemize}

\paragraph{Sampling strategies to evaluate the node clustering efficiency of Deep LPBM}  On the one hand, we propose to assess the clustering efficiency of the proposed methodology, by sampling networks with three noise levels depending on $\beta$ equal to either $0.1$, $0.2$ or $0.3$. The higher $\beta$ is, the more structured the sampled network is and the easier it is to retrieve the true node partition. Note that the true $\bfeta_i$ are binary vectors here since each node belongs to a single cluster. As such, the neworks are not sampled from the model we consider. To evaluate the relevance of the estimated node partitions, the adjusted random index \parencite[ARI,][]{hubert1985comparing} is used to compute how close the estimated partition is to the true one. It is worth noticing that for methodologies rendering partial membership vectors, such as Deep LPBM, each node is assigned to its corresponding highest partial membership probability. The closer the ARI is to 1, the better the results are. A perfect retrieval of the cluster memberships gives an ARI of 1, while a random cluster assignment leads to an ARI of 0. We emphasise that computing an ARI on real data is not to be done in the context of \textbf{unsupervised learning} as illustrated in \Cref{section:real_data}.

\paragraph{Sampling strategies to evaluate the node partial memberships} On the other hand, the partial memberships estimation is evaluated by the following sampling setting. Let $\overline{\bfeta}_i$ be a one-hot encoded cluster membership and $\bfeta_{unif} = (1/Q \cdots 1/Q) \in \Delta_Q$ the vector corresponding to a uniform cluster membership, we introduce the variable $\zeta \in (0,1)$ to control the noise levels such that the true partial memberships are given by:
\[
\eta^{\star}_i = \zeta \overline{\bfeta}_i + (1 - \zeta) \bfeta_{unif}.
\]
The closer $\zeta$ is to $1$, the closer the sampling is to the stochastic block model generative assumptions. The results concerning the partial membership assignments are provided in \Cref{fig:partial_membership}. 
To evaluate the relevance of the estimated $\hat{\bfeta}$, we compare the amount of cluster membership shared between pairs of data points $\hat{\bfU} = \hat{\bfeta} \hat{\bfeta}^{\top}$ and the true ones $\bfU^{\star} = \bfeta^{\star} \bfeta^{\star\top}$. Inspired by \cite{heller2008statistical, latouche2014model}, we then compute the mean square-root error of the difference between the two matrices given by:
\begin{equation}\label{eq:metric_partial_membership}
	H = \sqrt{ \frac{2}{N (N-1)} \sum_{i \leq j} | \bfU^{\star}_{ij} - \hat{\bfU}_{ij} | }.
\end{equation}
Adjacency matrices sampled according to these two strategies, with different $\beta$ and $\zeta$ values, are represented in the appendix.

\subsection{Model selection}\label{sec:model_selection_evaluation}
In this section, we compare the performance of the three model selection criteria presented in \Cref{sec:model_selection}, namely the AIC, BIC and ICL. To this aim, we sampled $10$ networks with a true number of clusters $Q^{\star}$ equal to $5$, as detailed in Section 5. To evaluate these criteria, Deep LPBM is fitted with a number of clusters $Q$ equal to $2, 3, 4, 5, 6, 10,$ and $16$, such that for each number of clusters and each network, we run the methodology with $10$ different initialisations and keep the result corresponding to the highest ELBO. If a model is fitted with $Q$ clusters and one of them collapsed, meaning that no node belongs to it, the model is acknowledged as a $Q-1$ clusters model. \Cref{table:model_selection_criterion_full} presents the results of the three model selection criteria on the three network structures considered. In all cases, AIC outperforms BIC and ICL. AIC performs better than its two alternatives in all three network structures. In particular, Deep LPBM is able to recover the true number of clusters $100\%$ of the time, in the presence of communities, and within a disassortative structure, while BIC and ICL systematically collapse and select an under-parametrised model. For the hub structure, AIC shows a strong performance by selecting the right number of clusters $90\%$ of the times, while BIC and ICL again never select the right number of clusters. Consequently, we strongly advocate to use AIC as the Deep LPBM model selection criterion, as we shall do in the rest of this work.
\begin{table}
	\centering
	\caption{Comparison of AIC (\ref{table:AIC_comparison}), BIC (\ref{table:BIC_comparison}) and ICL (\ref{table:ICL_comparison}) to select the best number of clusters for Deep LPBM with $\beta$ equal to $0.3$. The line corresponding to the true number of clusters, equal to $5$, is highlighted and the most selected number of clusters is written in bold.}
	\label{table:model_selection_criterion_full}
	\begin{subtable}{0.3\linewidth}
		\centering
		\caption{AIC}
		\label{table:AIC_comparison}
		\begin{tabular}{LCCC}
			\toprule
			Q & \text{Com} & \text{Dis} & \text{Hub} \\
			\midrule
			1 & 0 & 0 & 0\\
			2 & 0 & 0 & 0 \\
			3 & 0 & 0 & 0 \\
			4 & 0 & 0 & 1 \\
			\rowcolor{shadowed}
			5^{\star} & \bm{10} & \bm{10} & \bm{9} \\
			6 & 0 & 0 & 0 \\
			10 & 0 & 0 & 0 \\
			16 & 0 & 0 & 0 \\
			\bottomrule
		\end{tabular}
	\end{subtable}
	\hfill
	\begin{subtable}{0.3\linewidth}
		\centering
		\caption{BIC}
		\label{table:BIC_comparison}
		\begin{tabular}{LCCC}
			\toprule
			Q & \text{Com} & \text{Dis} & \text{Hub} \\
			\midrule
			1 & 0 & 2 & 0\\
			2 & \bm{10} & \bm{8} & \bm{10}\\
			3 & 0 & 0 & 0 \\
			4 & 0 & 0 & 0 \\
			\rowcolor{shadowed}
			5^{\star} &	0 & 0 & 0 \\
			6 & 0 & 0 & 0 \\
			10 & 0 & 0 & 0 \\
			16 & 0 & 0 & 0 \\
			\bottomrule
		\end{tabular}
	\end{subtable}
	\hfill
	\centering
	\begin{subtable}{0.3\linewidth}
		\centering
		\caption{ICL}
		\label{table:ICL_comparison}
		\begin{tabular}{LCCC}
			\toprule
			Q & \text{Com} & \text{Dis} & \text{Hub} \\
			\midrule
			1 & 0 & 2 & 0 \\
			2 & \bm{10} & \bm{8} & \bm{9}\\
			3 & 0 & 0 & 1 \\
			4 & 0 & 0 & 0 \\
			\rowcolor{shadowed}
			5^{\star} &	0 & 0 & 0 \\
			6 & 0 & 0 & 0 \\
			10 & 0 & 0 & 0 \\
			16 & 0 & 0 & 0 \\
			\bottomrule
		\end{tabular}
	\end{subtable}
\end{table}

\subsection{Benchmark evaluating the clustering performances}

This section aims to evaluate Deep LPBM as a node clustering methodology. Let us recall that Deep LPBM generative model is designed to estimate node partial memberships and not ``hard'' cluster assignments, contrary to SBM. Therefore, we associate each node to the cluster corresponding to its highest partial membership. We use the settings described in \Cref{sec:experiment_setting_sbm}. We stress that this sampling scheme corresponds to SBM generative assumptions, and, as such, Deep LPBM is not favoured by the sampling scheme compared to its competitors. We evaluate the model against the adversarially regularised variational graph autoencoder \parencite[ARVGA,][]{pan2018adversarially}, against the variational graph autoencoder \parencite[VGAE,][]{kipf2016variational}. We also compare Deep LPBM against the deep latent position cluster model \parencite[DLPM,][]{liang2022deep}, the stochastic block model \parencite[SBM,][]{holland1983stochastic, daudin2008mixture}, with  random initialisation (SBM random) and K-Means initialisation (SBM kmeans) as well as the variational Bayes latent position cluster model \parencite[VBLPCM,][]{salter2013variational}. For the methodologies that do not perform node clustering, namely the VGAE and the ARGVA, a K-Means algorithm is fitted on the estimated posterior node embeddings with the true number of clusters. More details concerning those competitors are provided in the appendix.
All methodologies are estimated using the true number of clusters, and the results are reported in \Cref{tab:benchmark_ari}. The best results is coloured in \red{red}, the second best in \blue{blue} and the third one in \green{green}. When two results are equal up to the standard deviation, they are identically coloured and if no signal is recovered, i.e with an ARI too low, no colouration is used.

In the absence of noise, corresponding to $\beta$ equal to $0.3$, Deep LPBM is the only positional methodology able to perfectly recover the true partition of the nodes in all three network structures. SBM with a K-Means initialisation also recovers the true partitions. However, with random initialisation, SBM obtains ARI significantly lower than Deep LPBM in all structures. Without noise, Deep LPBM outperforms all the positional methodologies and is as good as the SBM with a K-Means initialisation, which is specifically designed for the task of node clustering.

For $\beta$ equal to $0.2$, both in the community case and the hub case, Deep LPBM continues to efficiently cluster the nodes. It recovers the node partitions almost perfectly in the community structure. It obtains the second best results behind SBM with a K-Means initialisation. In the hub structure, it reaches an ARI of $0.89$, the second-best ARI, performing as well as DLPM, behind SBM with a K-Means initialisation. The disassortative case makes it difficult to simultaneously obtain the node positions and estimate block connectivity, leading to an ARI of $0.39$. While this is far behind the results of SBM with a K-Means initialisation and closely behind to SBM with a random initialisation, it is still an improvement upon all positional methods that cannot retrieve any signal.

\begin{table}[H]
	\centering
	\small
	\caption{Benchmark to compare Deep LPBM with competitors on three different graph structures, namely Communities (Com), Disassortative (Dis) and Hub structures, with three different noise levels.}
	\label{tab:benchmark_ari}
	\begin{tabular}{llCCC}
		\toprule
		&  & \text{Com} & \text{Dis} & \text{Hub} \\
		\midrule
		\multirow[c]{5}{*}{\rotatebox{90}{$\beta = 0.2$}}\hspace*{-0.5cm} & VBLPCM & \green{0.98 \pm 0.02} & 0.01 \pm 0.00 & 0.72 \pm 0.15 \\
		& DLPM & \blue{0.99 \pm 0.01} & 0.00 \pm 0.00 & \blue{0.89 \pm 0.10} \\
		& ARVGA & 0.85 \pm 0.03 & 0.01 \pm 0.01 & 0.28 \pm 0.06 \\
		& VGAE & 0.97 \pm 0.02 & 0.00 \pm 0.01 & 0.64 \pm 0.23 \\
		& SBM kmeans & \red{1.00 \pm 0.01} & \red{1.00 \pm 0.01} & \red{0.95 \pm 0.10} \\
		& SBM random & 0.70 \pm 0.03 & \blue{0.45 \pm 0.19} & 0.82 \pm 0.16 \\
		& Deep LPBM & \blue{0.99 \pm 0.01} & \green{0.39 \pm 0.13} & \blue{0.89 \pm 0.09} \\
		\midrule
		\multirow[c]{5}{*}{\rotatebox{90}{$\beta = 0.3$}}\hspace*{-0.5cm} & VBLPCM & \red{1.00 \pm 0.00} & 0.01 \pm 0.01 & 0.79 \pm 0.13 \\
		& DLPM & \red{1.00 \pm 0.00} & 0.00 \pm 0.00 & \blue{0.98 \pm 0.01} \\
		& ARVGA & 0.88 \pm 0.03 & 0.06 \pm 0.04 & 0.56 \pm 0.22 \\
		& VGAE & \red{1.00 \pm 0.00} & 0.00 \pm 0.01 & 0.72 \pm 0.16 \\
		& SBM K init & \red{1.00 \pm 0.00} & \red{1.00 \pm 0.00} & \red{1.00 \pm 0.00} \\
		& SBM R init & 0.68 \pm 0.15 & \blue{0.79 \pm 0.17} & \green{0.94 \pm 0.13} \\
		& Deep LPBM & \red{1.00 \pm 0.00} & \red{1.00 \pm 0.00} & \red{1.00 \pm 0.01} \\
		\bottomrule
	\end{tabular}
\end{table}

\subsection{Evaluation as a partial memberships model}\label{seq:partial_membership_evaluation}

The goal of this section is to assess Deep LPBM efficiency to estimate node partial memberships. Since VGAE and ARVGA do not provide such a quantity, they cannot be used for comparison. The $H$-quantity described in \Cref{eq:metric_partial_membership} is computed and averaged over the $10$ sampled networks. The results are provided in \Cref{fig:partial_membership}, where the shadowed area corresponds to the standard deviation computed from the estimations over the $10$ networks. Note that the sampling scheme used for each network is different from Deep LPBM generative assumptions, and as such does not favour the proposed methodology.

We start by remarking that $\zeta$ inferior or equal to $0.5$ induces partial memberships $\bfeta$ closer to a uniform membership distribution than to a one-hot encoded vector. In all three structures, SBM is not able to translate this contrary to Deep LPBM, which is indicated by the gap between the Deep LPBM and SBM results, for small $\zeta$ values. Interestingly, the only time SBM becomes slightly better than Deep LPBM, in the disassortative structure, happens for $\zeta$ getting closer to $1$, meaning when sampling assumptions are getting closer to SBM generative model. In all other cases, Deep LPBM renders more accurate partial memberships than SBM. In addition, Deep LPBM outperforms all positional methodologies. Even though DLPM can translate the uniform distribution of $\bfeta$ for a disassortative structure, it is clear that it is due to the absence of signal to detect, as can be seen on Figure 2 of the appendix. Since DLPM fails to detect any signal for a network with a disassortative structure, as stated in the previous section, the results for low $\zeta$ values only indicate the absence of detected signal. In particular, as soon as $\zeta$ increases, the metric worsens and approaches $0.6$, as for VBLPCM. This indicates that both methods fail to estimate $\bfeta$. Overall, since Deep LPBM provides better estimates regarding the node partial membership than all tested methodologies. 

\begin{figure}
	\centering
	\includegraphics[width=1\linewidth]{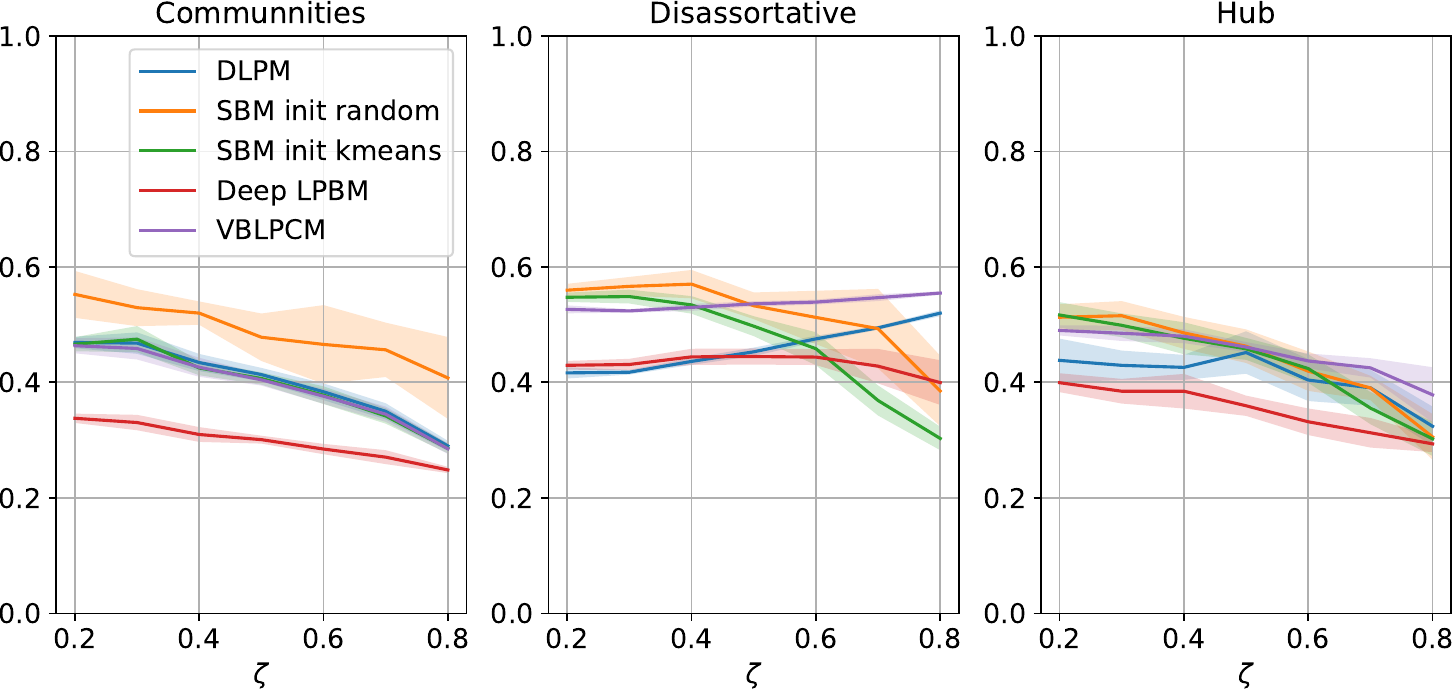}
	\caption{The partial memberships evaluation is obtained by plotting the results of \cref{eq:metric_partial_membership} for different values of $\zeta$. The lower the quantity is, the better the estimation of $\bfeta$ is. A larger version is available in the appendix.}
	\label{fig:partial_membership}
\end{figure}

\section{Analysis of the French political blogosphere network}\label{section:real_data}

In this last section, we propose to apply Deep LPBM on the  ``French Political Blogosphere'' dataset \parencite{zanghi2008fast}. This network was collected in 2006 in order to analyse the French presidential campaign on the web. Each node corresponds to a blog, and each edge to a hyperlink between two blogs. We assume that the edges are not oriented to simplify the analysis. In addition, when several hyperlinks are present between two nodes, they are gathered into a single edge. The number of clusters selected by AIC is equal to $8$, see the appendix for the evolution of AIC in function of $Q$. \Cref{fig:blog_network_visualisation} is obtained by projecting $\bfeta$, estimated by Deep LPBM with the t-sne algorithm \parencite{van2008visualizing}. The node colours in the left-hand side figure denote the political party associated to the corresponding blogs, while the node pie-charts, in the right-hand side figure, represent the estimated partial cluster memberships $\bfeta$.  In addition, $\hat{\bfPi}$ is displayed in \Cref{fig:estimated_pi_matrix_blog}.

We start by noting that in \Cref{fig:blog_network_visualisation_dlpbm}, Cluster $5$ corresponds to poorly connected nodes as shown by $\hat{\bfPi}$ in \Cref{fig:estimated_pi_matrix_blog}. Thanks to the partial memberships, we can refine these results by noting that among the poorly connected nodes, some partial memberships hold several colours, indicating that their connections share the connectivity patterns of several clusters. To give an example, among the UDF blogs in \Cref{fig:blog_network_visualisation_political_parties}, some are poorly connected to other nodes, as the nodes in Cluster $5$, but also share connectivity patterns with Cluster $6$. Thanks to the partial memberships, they are placed in between Clusters $5$ and $6$ in \Cref{fig:blog_network_visualisation_dlpbm}, with pie charts indicating a significant level for the two partial memberships, namely the red and the violet one. This observation can be extended to the rest of the network.

Cluster 2 presents interesting properties displayed in \Cref{fig:U_hat_blog}. Indeed, its first characteristic is to be highly internally connected as indicated by the $(2,2)$ coordinate of the $\bfPi$ matrix displayed in \Cref{fig:estimated_pi_matrix_blog}. In addition, nodes behaving like Cluster $2$ tend to also share similar connectivity patterns with other clusters, as shown by the correlations displayed in \Cref{fig:U_hat_blog}. Indeed, many nodes with a high Cluster $2$ partial membership seem to behave similarly to Clusters $4$ and $5$ and, to a lesser extent, to Cluster $1$, $6$ and $8$. In other words, these nodes behave almost like a hub. This flexibility can only be obtained at the node level and cannot be rendered at the block level. Therefore, the visualisation of the entire network is enriched compared to the visualisation based only on the clusters, as in block modelling. Indeed, in \Cref{fig:blog_network_visualisation_dlpbm}, many nodes holding a significant partial membership to Cluster 2 are in between clusters.

To conclude, let us emphasise that comparing ARI in an unsupervised setting on real data is not helpful to evaluate the effectiveness of the methodology. To illustrate this, let us consider the political parties as the  ``true node labels''. It would imply that Cluster $1$ and $7$ should be merged into a single group. However, \Cref{fig:Pi_and_eta_matrices} clearly shows that two connectivity patterns compose UMP blogs, represented by Cluster $1$ and $7$, poorly connected one to another but highly connected internally. Indeed, unsupervised learning aims at finding patterns within the data and not to predict labels. A comparison with SBM results is provided in the appendix.
\begin{figure}[t]
	\centering
	\subcaptionbox{Estimated $\hat{\bfPi}$.	\label{fig:estimated_pi_matrix_blog}}[0.42\linewidth]{	\includegraphics[width=\linewidth]{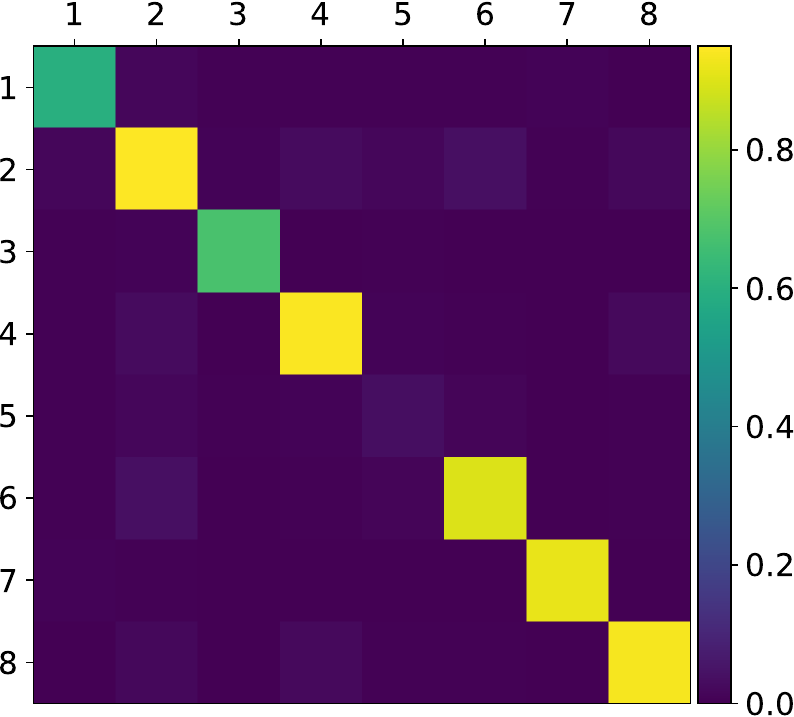}}
	\subcaptionbox{Estimated $\hat{\bfU}$ defined in \Cref{sec:experiment_setting_sbm}.		\label{fig:U_hat_blog}}[0.42\linewidth]{\includegraphics[width=\linewidth]{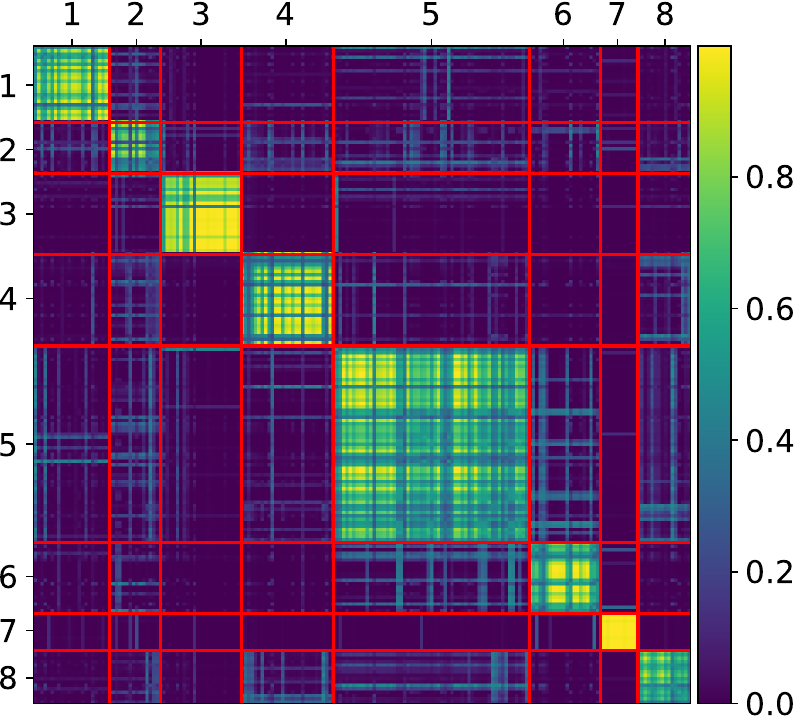}}
	\caption{Visualisation of $\hat{\bfPi}$ and $\hat{\bfU}$ matrices. On the right-hand side, $\hat{\bfU}$ is a $N \times N$ matrix but is ordered by block which are delimited by the red lines.}
	\label{fig:Pi_and_eta_matrices}
\end{figure}
\vspace*{-0.1cm}
\begin{figure}[t]
	\centering
	\subcaptionbox{Political parties.\label{fig:blog_network_visualisation_political_parties}}[0.49\linewidth]{	\includegraphics[width=\linewidth]{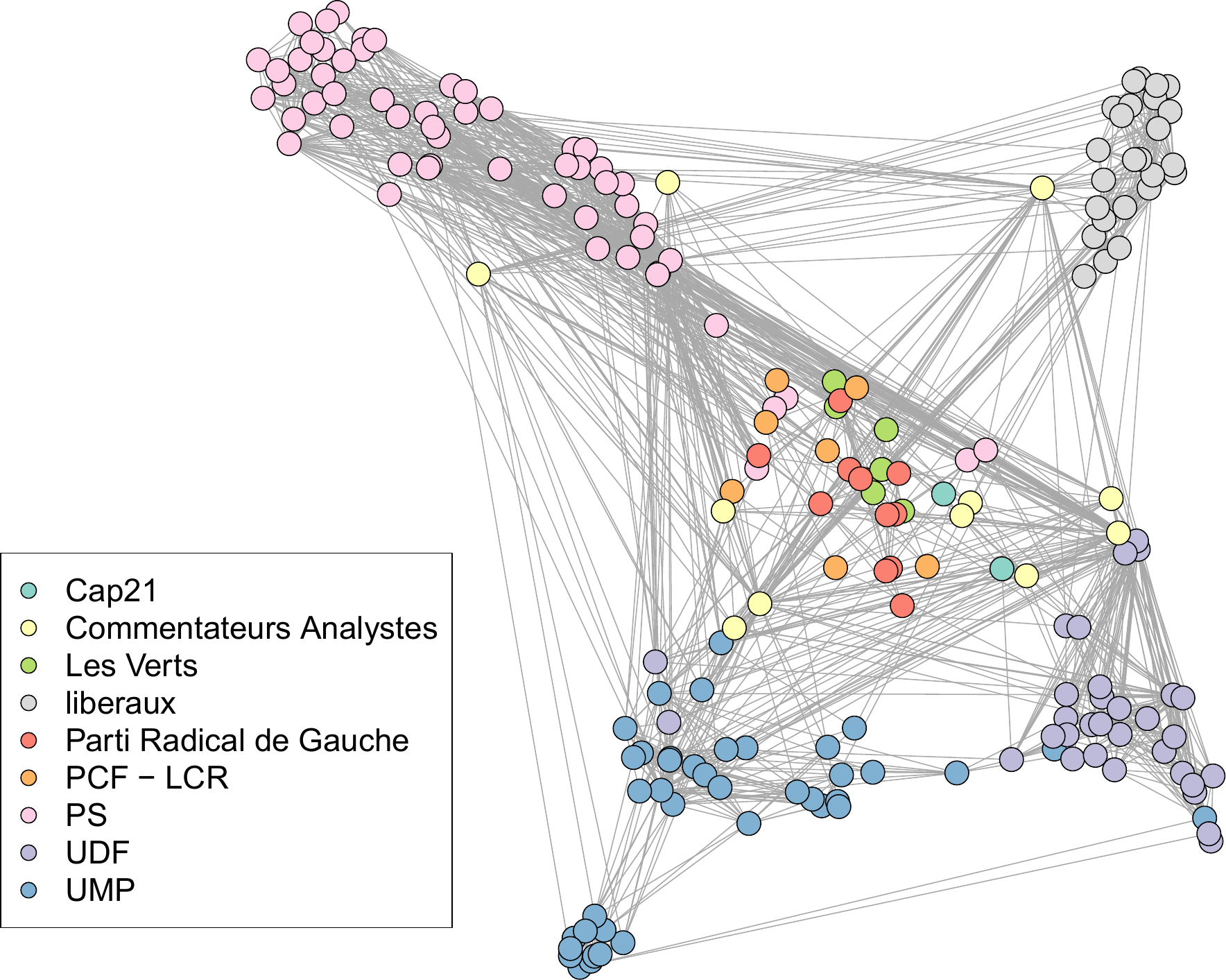}}
	\subcaptionbox{Deep LPBM partial memberships.		\label{fig:blog_network_visualisation_dlpbm}}[0.49\linewidth]{\includegraphics[width=\linewidth]{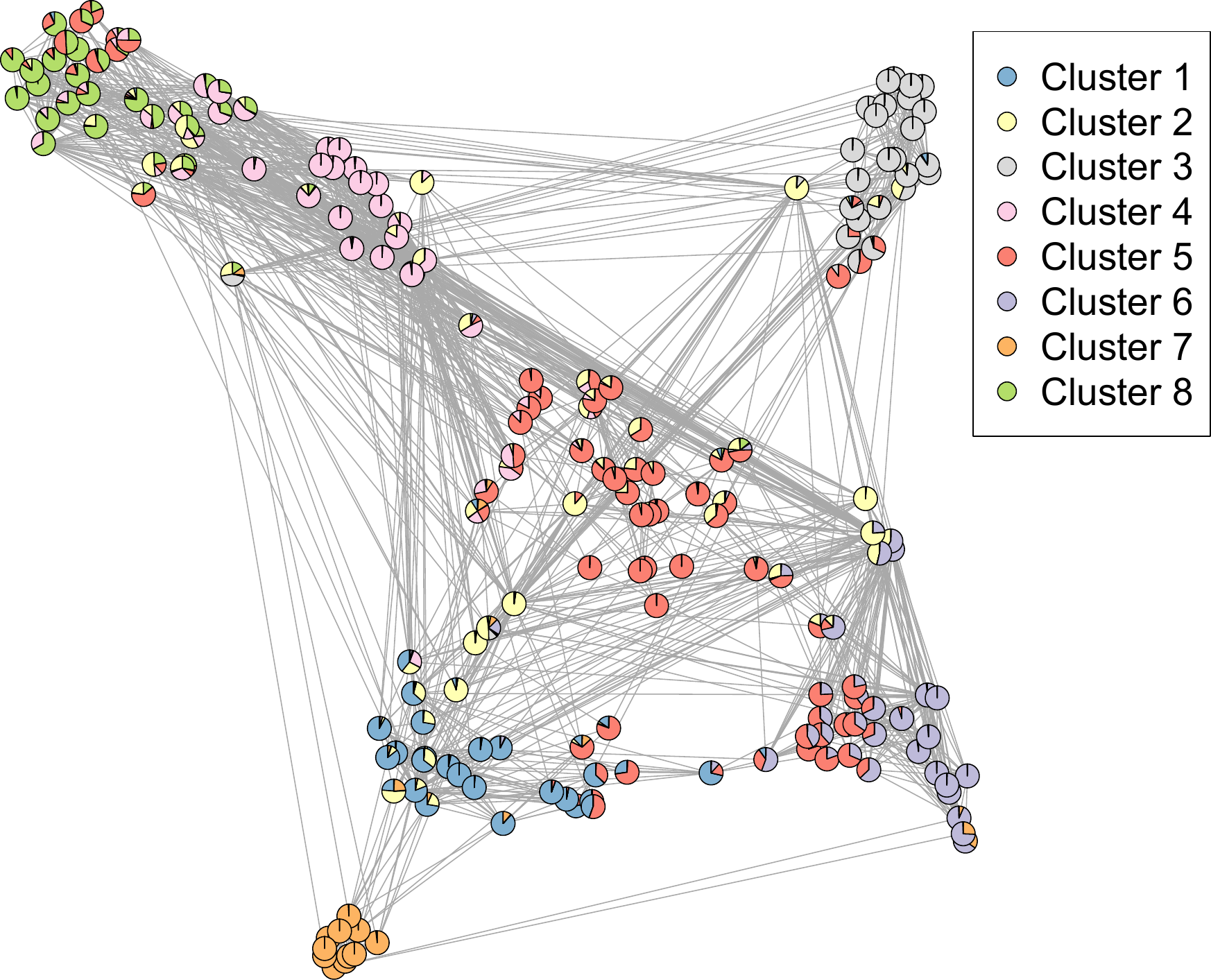}}
	\caption{All node positions were estimated by Deep LPBM. On the right-hand side, the node colours indicate the political party associated to the blog. On the left-hand side, each node is depicted by a pie chart representing the partial memberships estimated by Deep LPBM. A large version is available in the appendix.}
	\label{fig:blog_network_visualisation}
\end{figure}

\section{Conclusion and discussion}
\label{sec:conclusion}

We introduced a novel end-to-end methodology that simultaneously estimates node partial memberships, cluster connectivity patterns, as well as a cluster-based network visualisation. This framework, entitled the deep latent position block model (Deep LPBM), provides refined results compared to block model approaches. Contrary to those methods, it is also able to provide a visualisation of the entire network based on the estimation of the partial memberships. Deep LPBM extends the current position-based methodologies by using a variational graph autoencoder (VGAE) with a specifically designed new block decoder, allowing to analyse new connectivity patterns, such as disassortative networks. In addition, an extensive benchmark of Deep LPBM against state-of-the art methods is provided to first assess the quality of the estimated partial memberships and second to evaluate the clustering performances of the algorithm. Deep LPBM outperforms all competitors on the first task and shows very good results in low and moderate noise regimes on the second task. Eventually, a quality evaluation of the results is provided through a comparison of Deep LPBM and SBM results on the French political blogosphere.

\printbibliography

\newpage
\appendix

\section{Inference}

\subsection{Identifiability}
The following lemma permits to construct two different sets of parameters resulting in the same probabilities of connection.
\begin{lemma}[\cite{daudin2010model}]\label{lemma:H_matrix_identifiability}
	Let $\bfH \in \mathcal{M}_{Q \times Q}(\mathbb{R})$ be a matrix such that:
	\begin{enumerate}[label=(\arabic*)]
		\item[(A1)] $\bfH^{-1}$ exists,
		\item[(A2)] $\bfH \bm{1_Q} = \bm{1}_Q$, where $\bm{1}_Q = (1, \dots, 1)^{\top}$ be the $Q$-dimensional vector made of $1$, 
		\item[(A3)] $\tilde{\bfeta} = \bfeta \bfH \geq 0$,
		\item[(A4)] $\tilde{\bfPi} = \bfH^{-1} \bfPi (\bfH^{\top})^{-1} \in \mathcal{M}_{Q \times Q}([0, 1])$.
	\end{enumerate}
	Then, the following holds true:
	\begin{itemize}
		\item For any node $i$, $\tilde{\bfeta}_i^{\top} \bf{1}_Q = \bfeta_i^{\top} \bfH \bf{1}_Q =  \bfeta_i \bf{1}_Q = 1$, i.e $\bfeta_i \in \Delta_Q$,
		\item $\tilde{\bfPi} \in \mathcal{M}_{Q \times Q}([0, 1])$,
		\item $\tilde{\bfeta} \tilde{\bfPi} \tilde{\bfeta}^{\top} = \bfeta \bfH \bfH^{-1} \bfPi (\bfH^{\top})^{-1} \bfH^{\top} \bfeta^{\top} =  \bfeta  \bfPi \bfeta^{\top}$.
	\end{itemize}
\end{lemma}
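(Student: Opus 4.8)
The plan is to verify the three conclusions directly from the four hypotheses, since once (A1)--(A4) are granted the statement is pure matrix algebra and presents no real obstacle; the only delicate point in the surrounding argument is not in the lemma itself but in producing a non-trivial $\bfH$, which is outsourced to \cite{daudin2010model}.

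First I would check that every row of $\tilde{\bfeta} = \bfeta\bfH$ lies in the simplex $\Delta_Q$. Entrywise non-negativity is exactly hypothesis (A3). For the normalisation constraint, write $\tilde{\bfeta}_i^{\top} = \bfeta_i^{\top}\bfH$ for the $i$-th row of $\tilde{\bfeta}$ and combine the invariance property (A2) with $\bfeta_i \in \Delta_Q$ (hypothesis (H1) of \Cref{theorem:identifiability}):
\begin{equation*}
  \tilde{\bfeta}_i^{\top}\bm{1}_Q = \bfeta_i^{\top}(\bfH\bm{1}_Q) = \bfeta_i^{\top}\bm{1}_Q = 1 ,
\end{equation*}
so (H1) is preserved. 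The second conclusion, $\tilde{\bfPi} \in \mathcal{M}_{Q\times Q}([0,1])$, is literally hypothesis (A4); for completeness I would also note that symmetry is preserved, since $\bigl(\bfH^{-1}\bfPi(\bfH^{\top})^{-1}\bigr)^{\top} = \bfH^{-1}\bfPi^{\top}(\bfH^{\top})^{-1} = \bfH^{-1}\bfPi(\bfH^{\top})^{-1}$ using $\bfPi = \bfPi^{\top}$, so (H2) holds for $\tilde{\bfPi}$ as well.

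For the last identity I would substitute the definitions of $\tilde{\bfeta}$ and $\tilde{\bfPi}$, expand $(\bfeta\bfH)^{\top} = \bfH^{\top}\bfeta^{\top}$, and cancel the telescoping factors, which is legitimate because hypothesis (A1) guarantees $\bfH\bfH^{-1} = (\bfH^{\top})^{-1}\bfH^{\top} = \bfI_Q$:
\begin{equation*}
  \tilde{\bfeta}\,\tilde{\bfPi}\,\tilde{\bfeta}^{\top} = \bfeta\bfH\,\bfH^{-1}\bfPi(\bfH^{\top})^{-1}\,\bfH^{\top}\bfeta^{\top} = \bfeta\,\bfPi\,\bfeta^{\top} .
\end{equation*}

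There is nothing hard inside the lemma; the content it deliberately leaves to \cite{daudin2010model} is the \emph{existence} of a suitable $\bfH \neq \bfI_Q$. The hard part there is to move $\bfH$ away from the identity far enough that $(\tilde{\bfeta},\tilde{\bfPi}) \neq (\bfeta,\bfPi)$, while still keeping every entry of $\tilde{\bfPi} = \bfH^{-1}\bfPi(\bfH^{\top})^{-1}$ inside $[0,1]$ and every entry of $\bfeta\bfH$ non-negative — a pair of competing constraints that limits the admissible perturbations. That construction, rather than the verification above, is where the real work lies, and, as the paper remarks just after \Cref{theorem:identifiability}, its effect is to alter $\Tr(\tilde{\bfeta}^{\top}\tilde{\bfeta})$ and the spread of $\tilde{\bfPi}$ while leaving $\bfP = \bfeta\bfPi\bfeta^{\top}$ untouched.
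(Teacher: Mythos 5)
Your proposal is correct and matches the paper's proof, which simply states that the lemma follows by direct application of (A1)--(A4); you spell out the same three verifications (non-negativity and row sums via (A2)--(A3), the range condition via (A4), and the telescoping cancellation via (A1)). Your added remarks on symmetry preservation and on where the real difficulty lies (constructing a non-trivial $\bfH$, deferred to the cited reference) are accurate but not part of the paper's argument for this lemma.
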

\begin{proof}
	The proof is a direct application of the assumptions (A1) to (A4).
\end{proof}

\section{Experiments on synthetic data}\label{sec:synthetic_data_appendix}

\subsection{Simulation settings}\label{sec:simu_settings_appendix}
In this section, we provide additional information concerning the simulation setting proposed to establish the benchmark. \Cref{tab:PI_matrices} presents the connectivity matrices corresponding to the network structure considered in the simulations.

\begin{table}[H]
	\centering
	\begin{tabular}{c}
		Communities \\
		$
		\begin{pmatrix}
			\beta & \varepsilon & \varepsilon & \varepsilon & \varepsilon \\
			\varepsilon & \beta & \varepsilon & \varepsilon & \varepsilon \\
			\varepsilon & \varepsilon & \beta & \varepsilon & \varepsilon \\
			\varepsilon & \varepsilon & \varepsilon & \beta & \varepsilon \\
			\varepsilon & \varepsilon & \varepsilon & \varepsilon & \beta
		\end{pmatrix}
		$
	\end{tabular}
	\begin{tabular}{c}
		Disassortative \\
		$
		\begin{pmatrix}
			\varepsilon & \beta & \beta & \beta & \beta \\
			\beta & \varepsilon & \beta & \beta & \beta \\
			\beta & \beta & \varepsilon & \beta & \beta \\
			\beta & \beta & \beta & \varepsilon & \beta \\
			\beta & \beta & \beta & \beta & \varepsilon
		\end{pmatrix}
		$ 
	\end{tabular}
	\begin{tabular}{c}
		Hub \\
		$
		\begin{pmatrix}
			\beta & \beta & \beta & \beta & \beta \\
			\beta & \beta & \varepsilon & \varepsilon & \varepsilon \\
			\beta & \varepsilon & \beta & \varepsilon & \varepsilon \\
			\beta & \varepsilon & \varepsilon & \beta & \varepsilon \\
			\beta & \varepsilon & \varepsilon & \varepsilon & \beta
		\end{pmatrix}
		$
	\end{tabular}
	
	\label{tab:PI_matrices}
	\caption{Connectivity matrix $\bfPi$ corresponding to a community, a disassortative, and a hub network structure.}
\end{table}

\begin{table}
	\centering
	\newcommand{\spacing}{0.1}
	\newcommand{\FigWidth}{0.22}
	\begin{tabular}{lccc}
		& Communities & Disassortative & Hub\\
		\rotatebox{90}{\hspace*{\spacing cm}$\beta = 0.2$} & 
		\includegraphics[width=\FigWidth \linewidth]{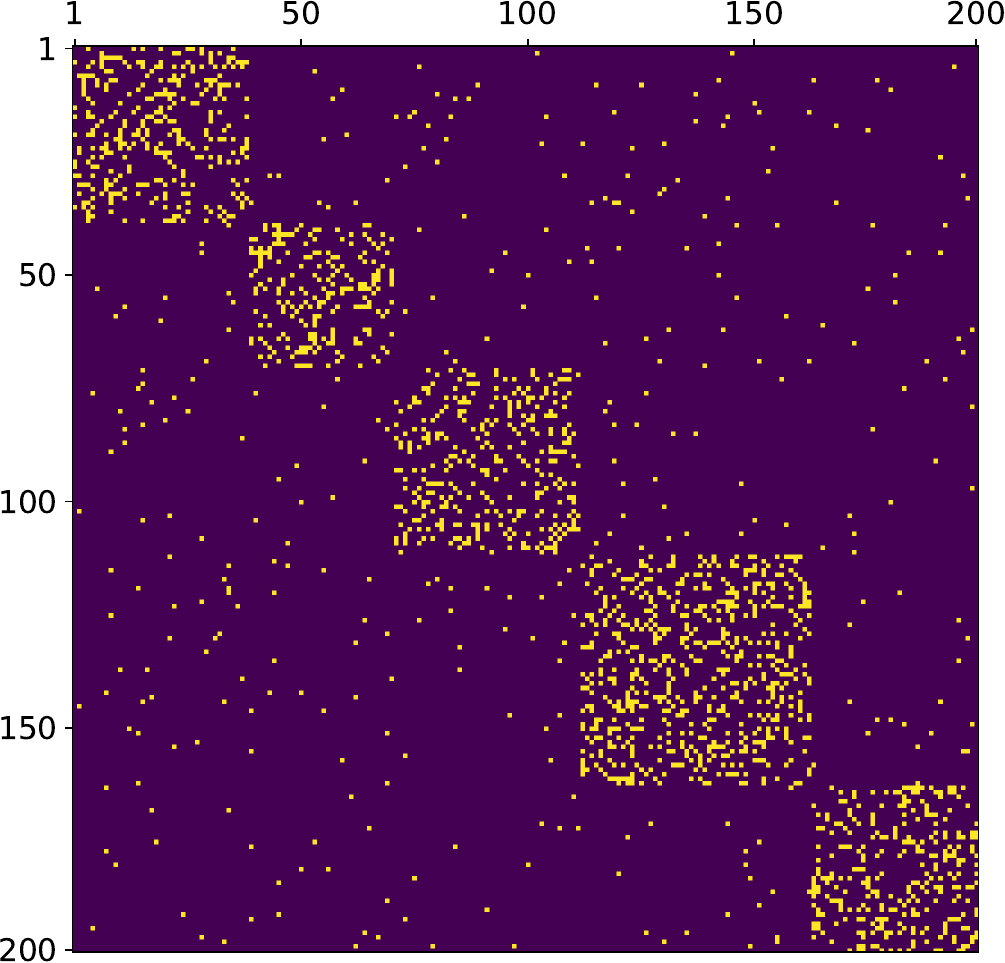}&
		\includegraphics[width=\FigWidth \linewidth]{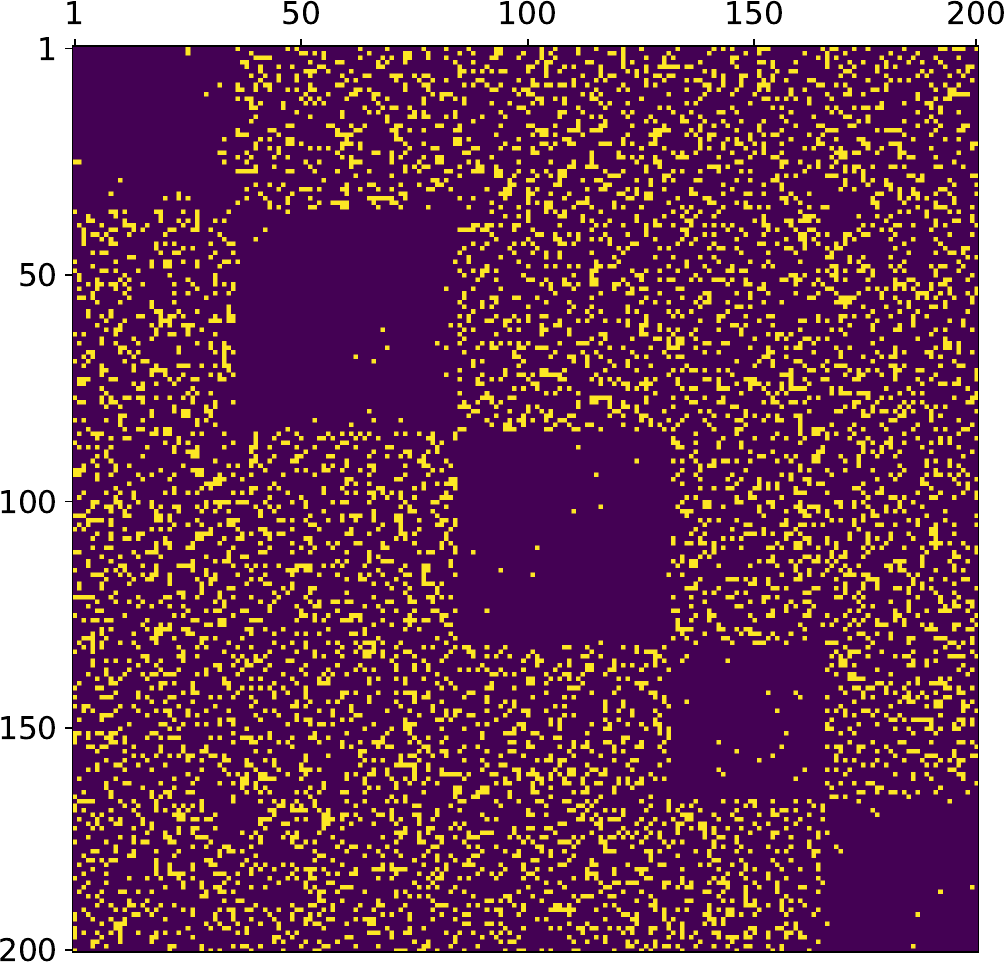}&
		\includegraphics[width=\FigWidth \linewidth]{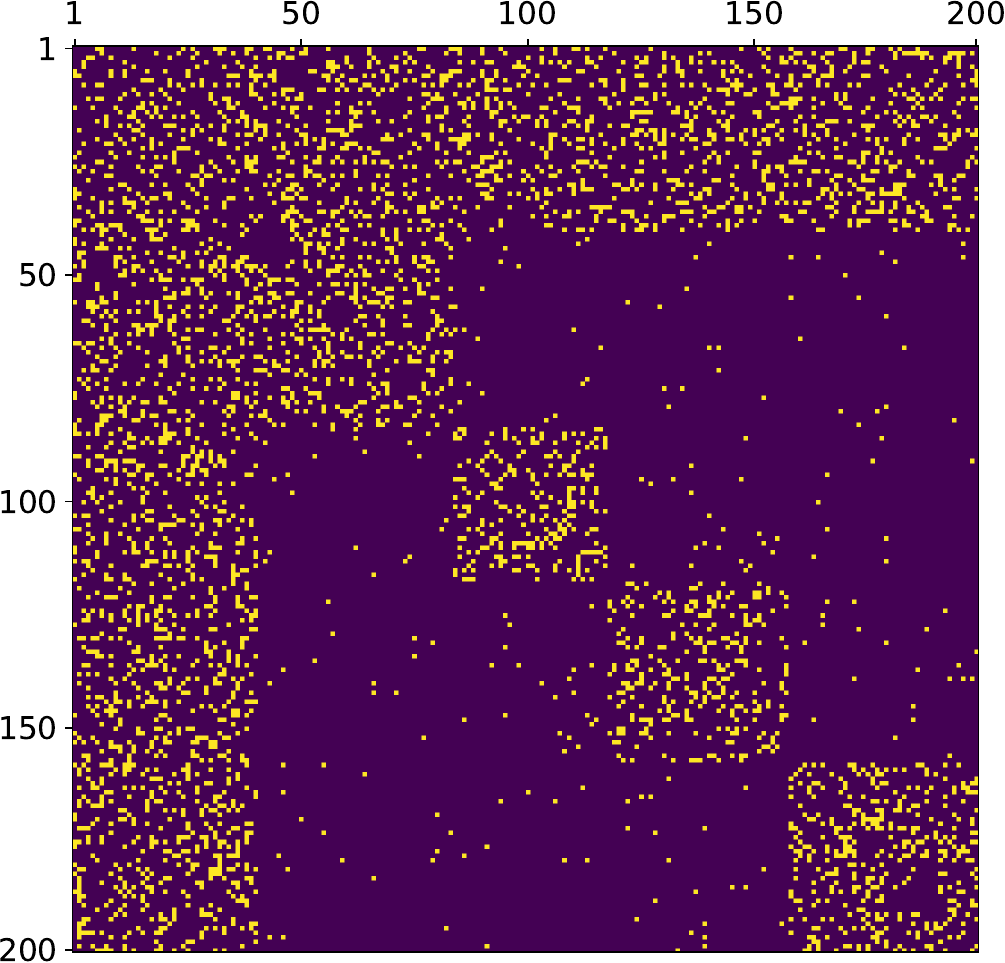}\\
		\rotatebox{90}{\hspace*{\spacing cm}$\beta = 0.3$} & 
		\includegraphics[width=\FigWidth \linewidth]{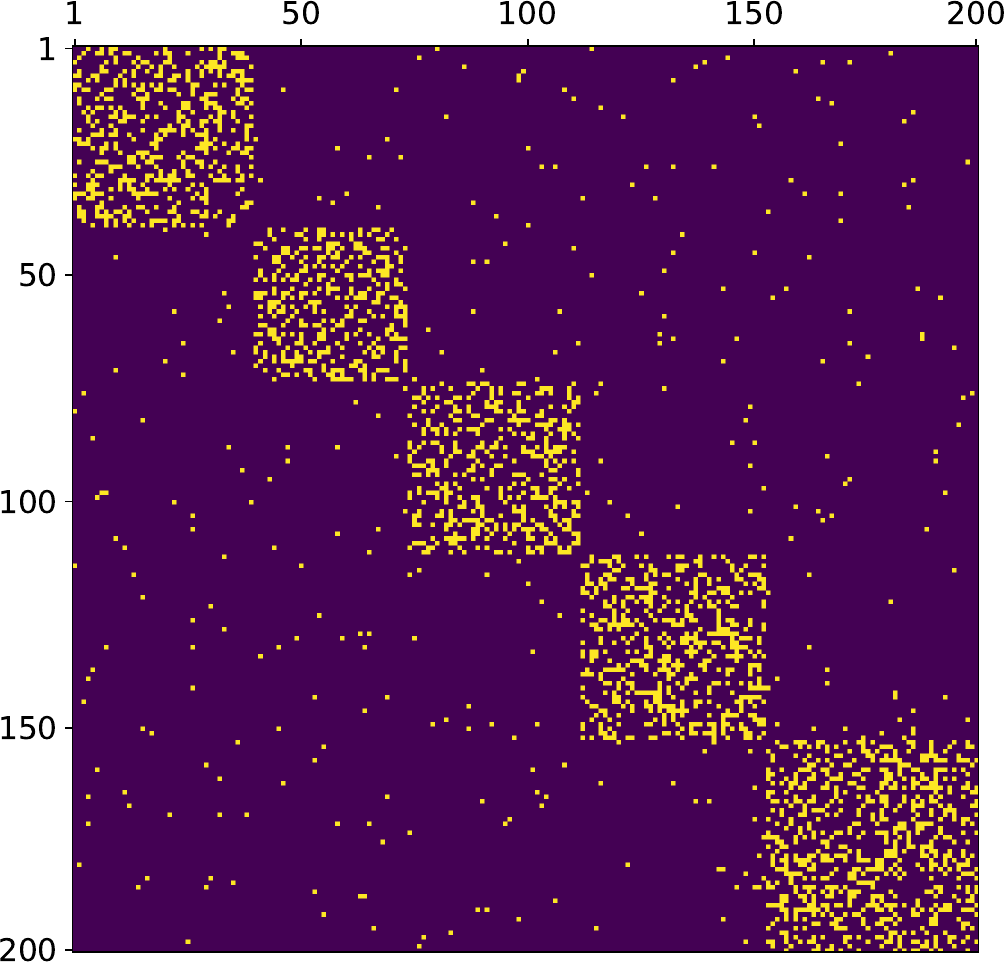}&
		\includegraphics[width=\FigWidth \linewidth]{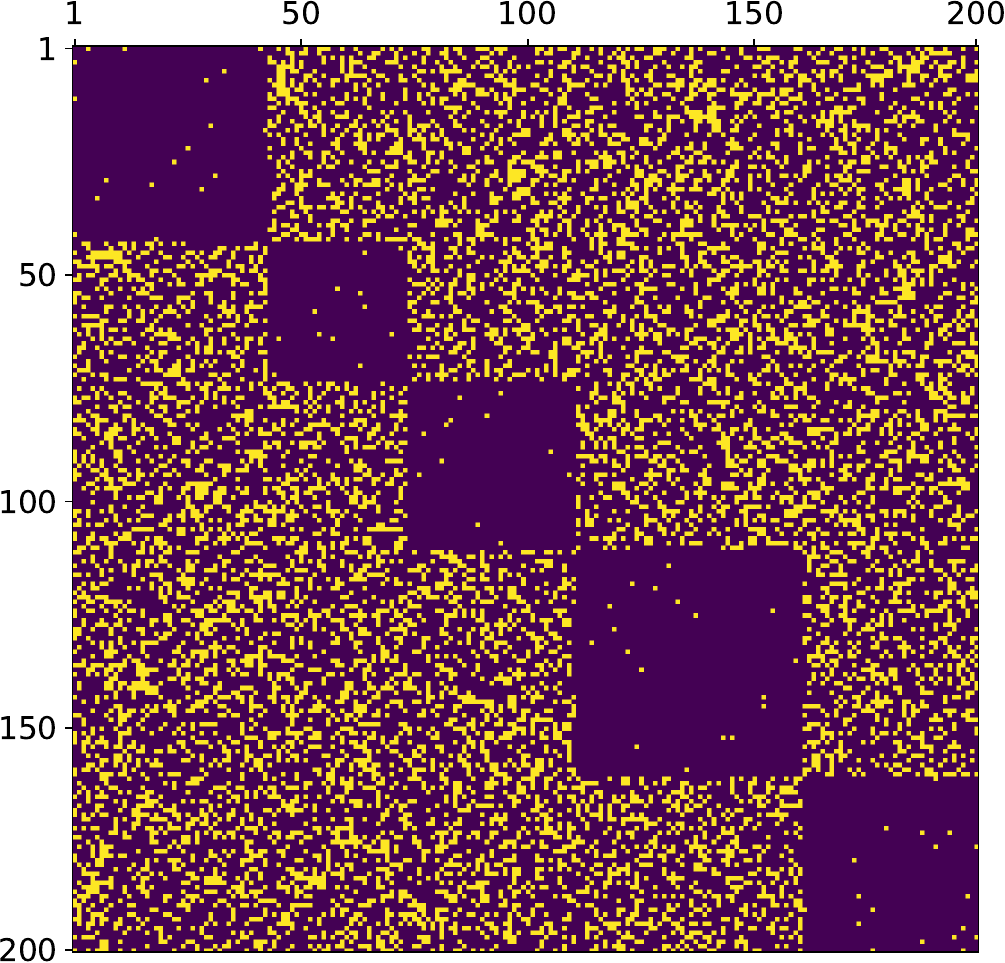}&
		\includegraphics[width=\FigWidth \linewidth]{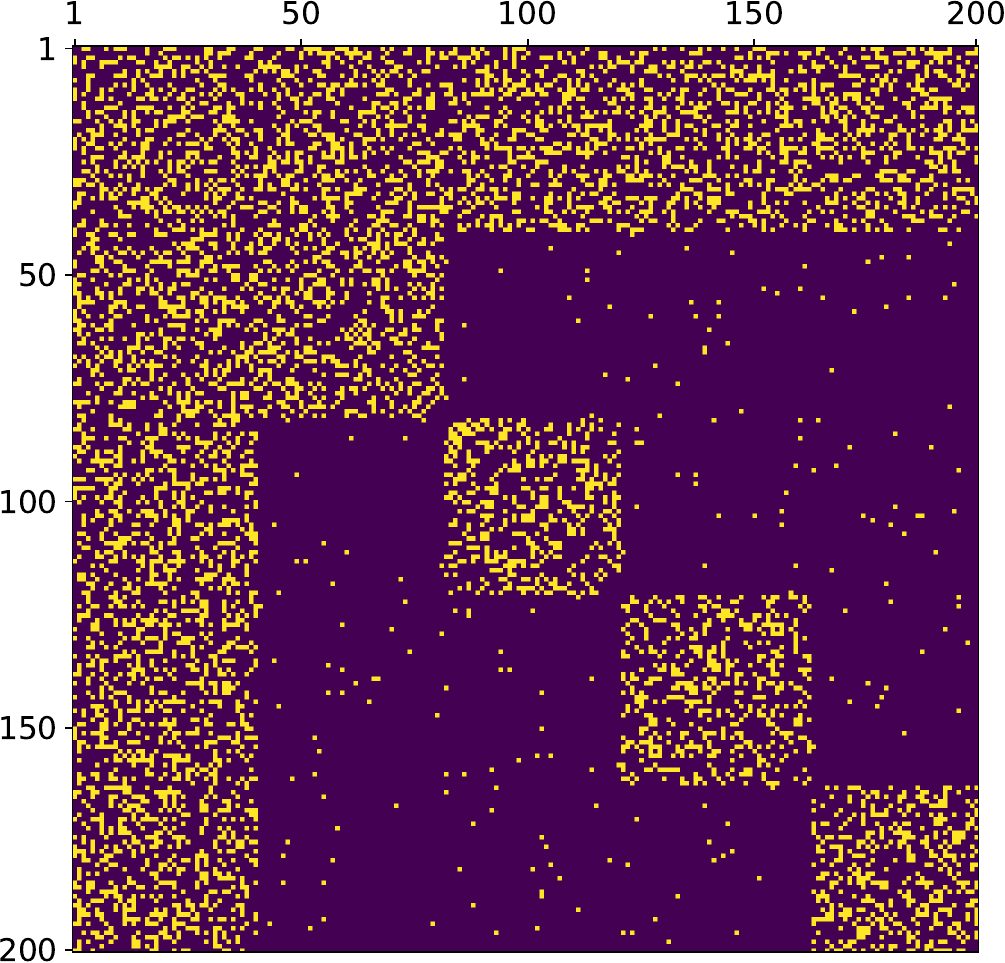}\\	
	\end{tabular}
	\captionof{figure}{Adjacency matrix sampled according to the clustering setting, as described in Section 5 for different values of $\beta$. A yellow (purple respectively) pixel at coordinate $(i,j)$ indicates the existence (the absence) of an edge between nodes $i$ and $j$.}
	\label{fig:appendix_adjacency_clustering_details}
\end{table}

\begin{table}
	\centering
	\newcommand{\spacing}{0.23}
	\newcommand{\FigWidth}{0.22}
	\begin{tabular}{lccc}
		& Communities & Disassortative & Hub\\
		\rotatebox{90}{\hspace*{\spacing cm}$\zeta = 0.2$} & 
		\includegraphics[width=\FigWidth \linewidth]{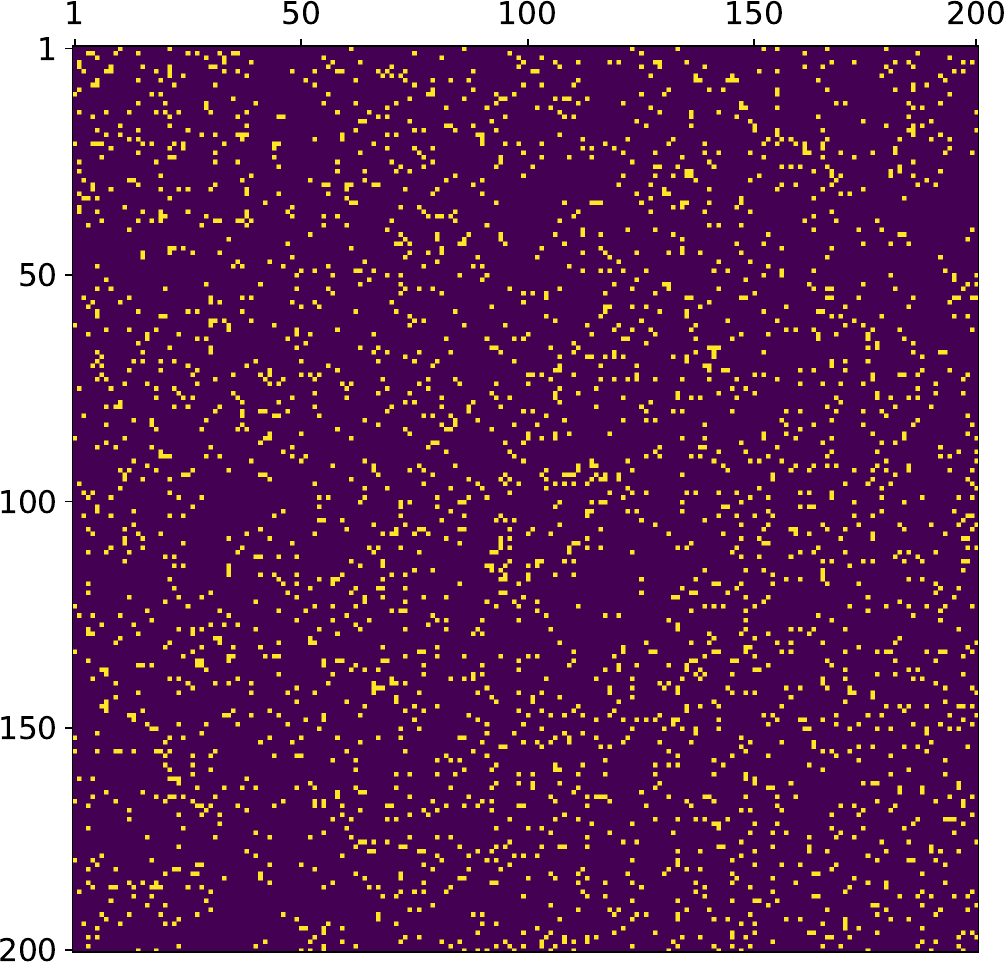}&
		\includegraphics[width=\FigWidth\linewidth]{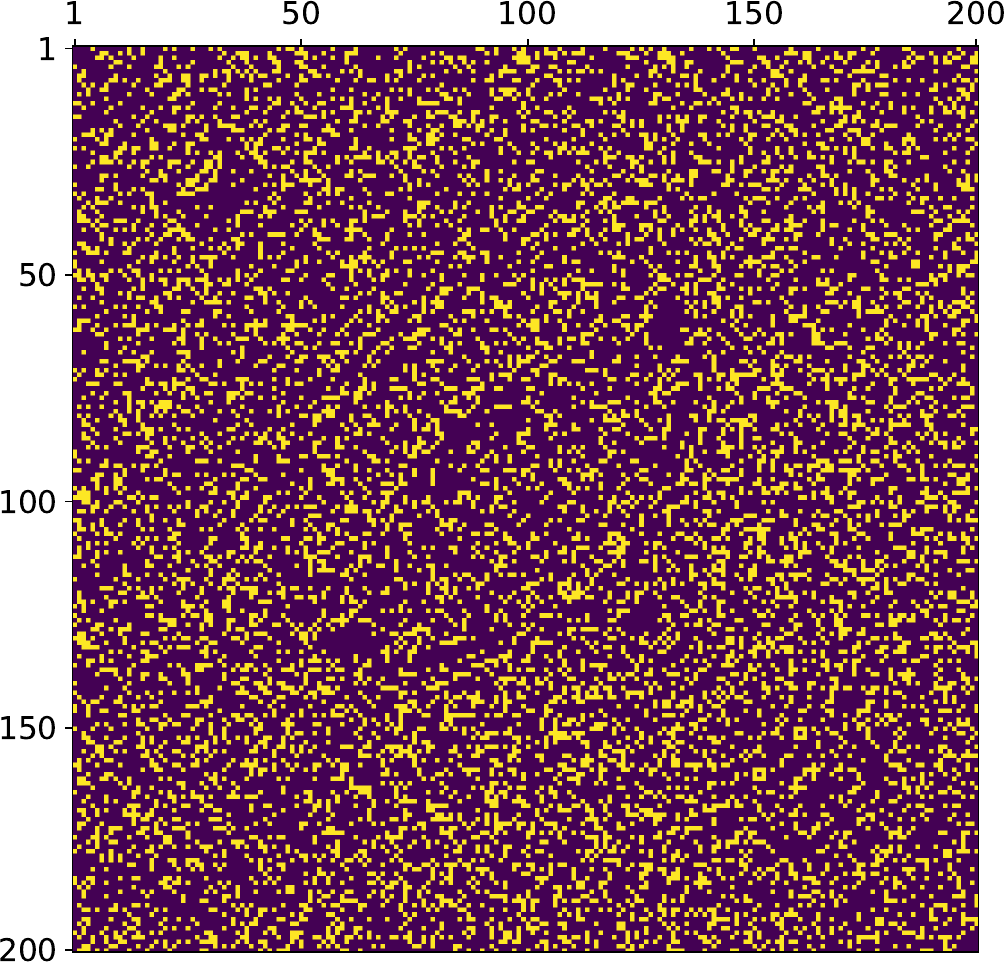}&
		\includegraphics[width=\FigWidth\linewidth]{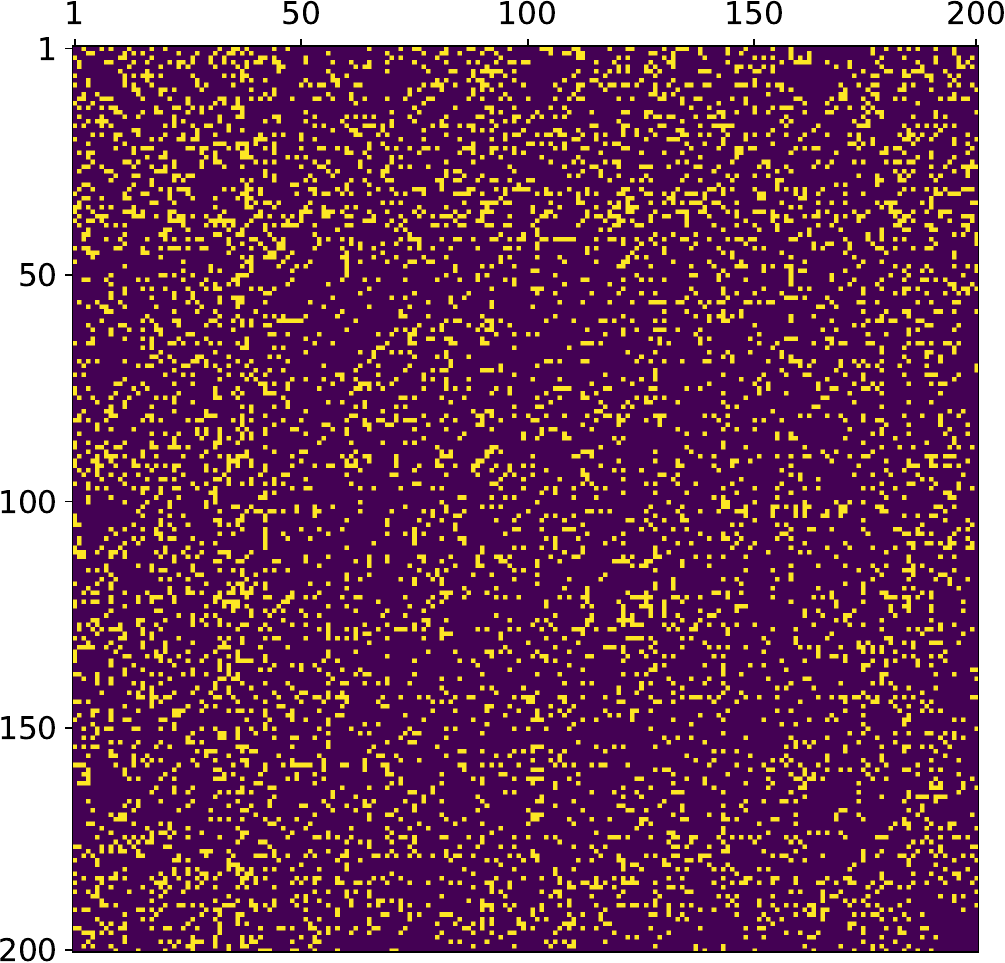}\\
		\rotatebox{90}{\hspace*{\spacing cm}$\zeta = 0.4$} & 
		\includegraphics[width=\FigWidth\linewidth]{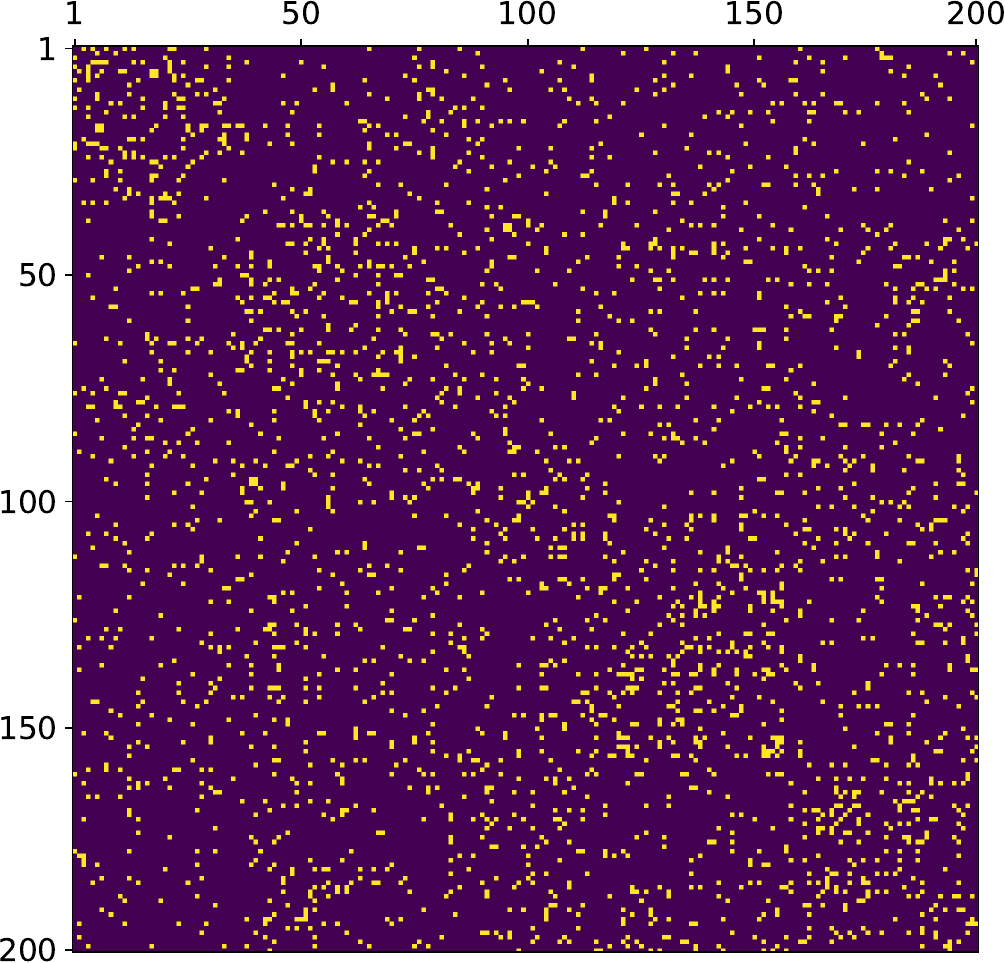}&
		\includegraphics[width=\FigWidth\linewidth]{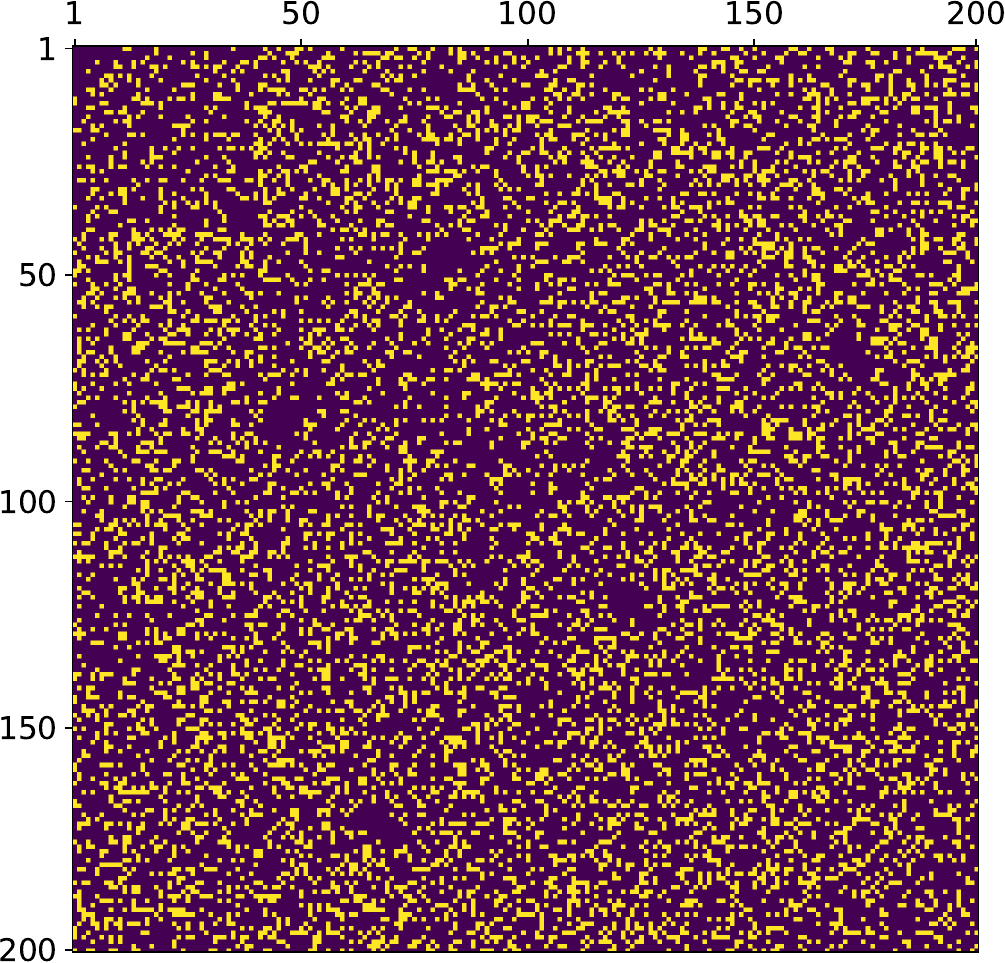}&
		\includegraphics[width=\FigWidth\linewidth]{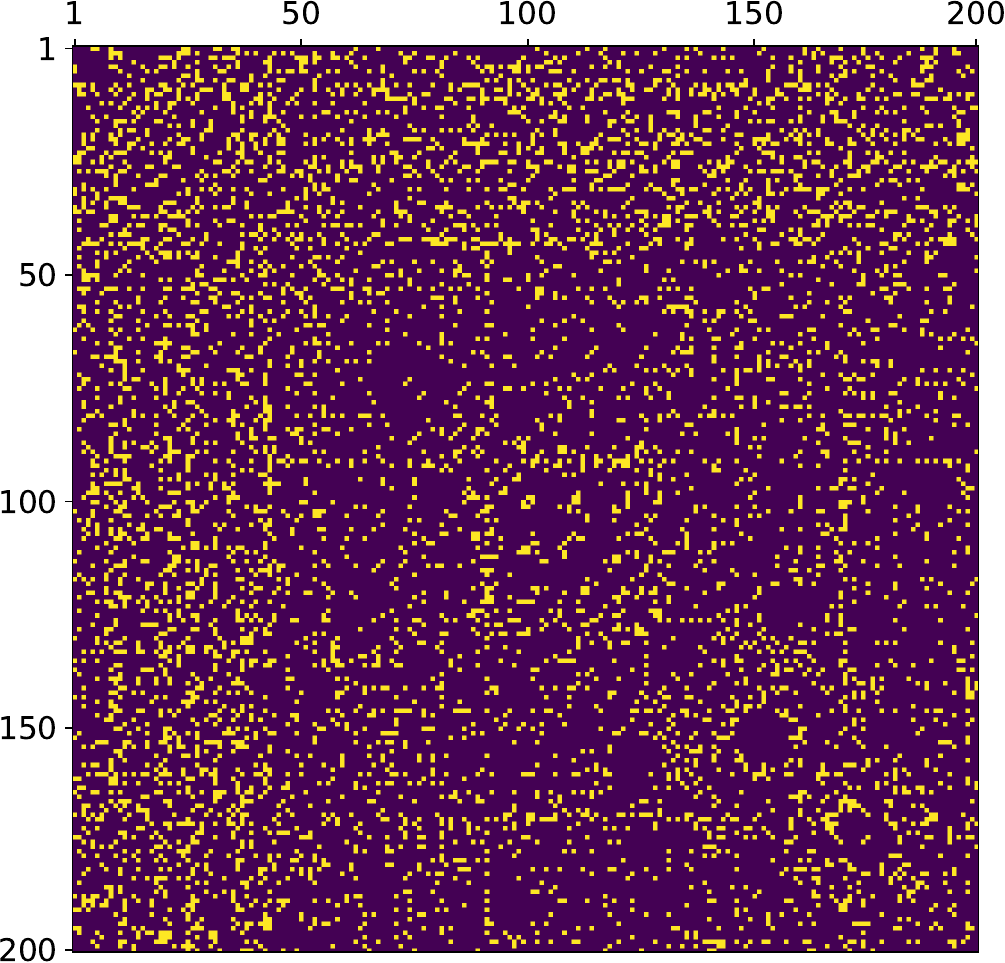}\\
		\rotatebox{90}{\hspace*{\spacing cm}$\zeta = 0.6$} & 
		\includegraphics[width=\FigWidth\linewidth]{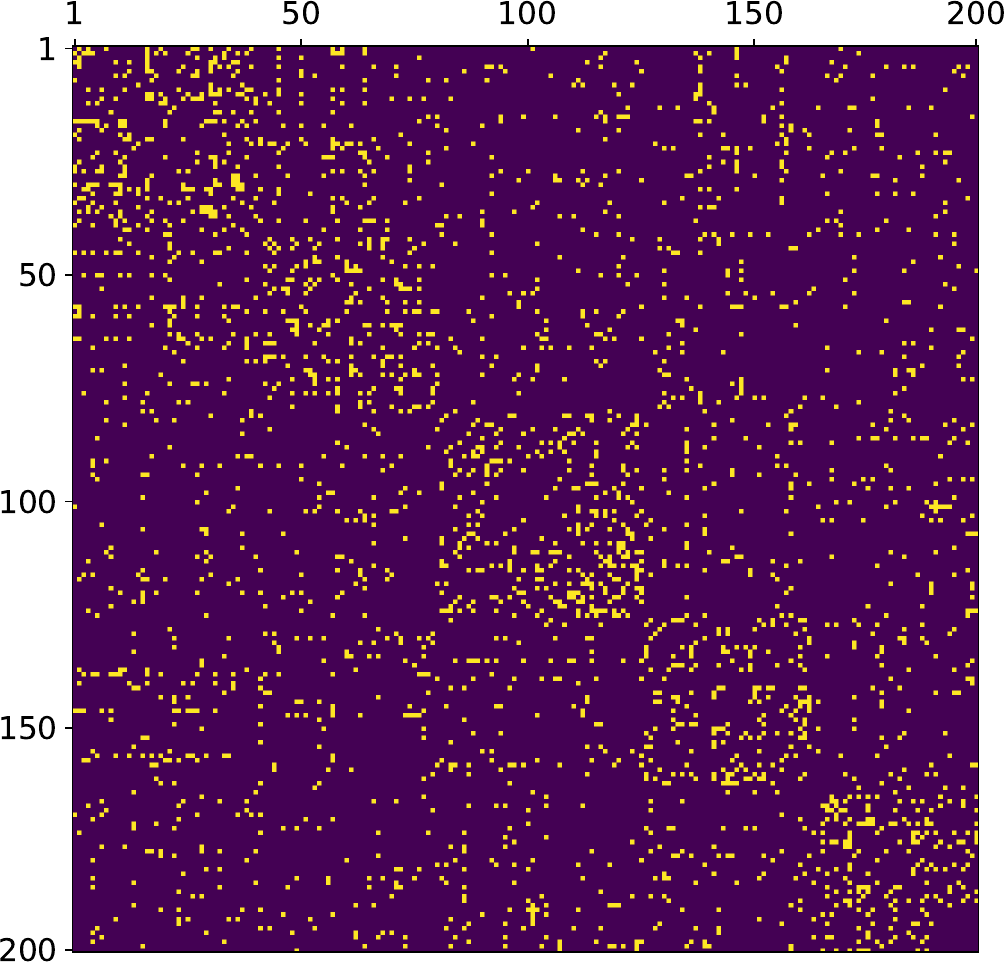}&
		\includegraphics[width=\FigWidth\linewidth]{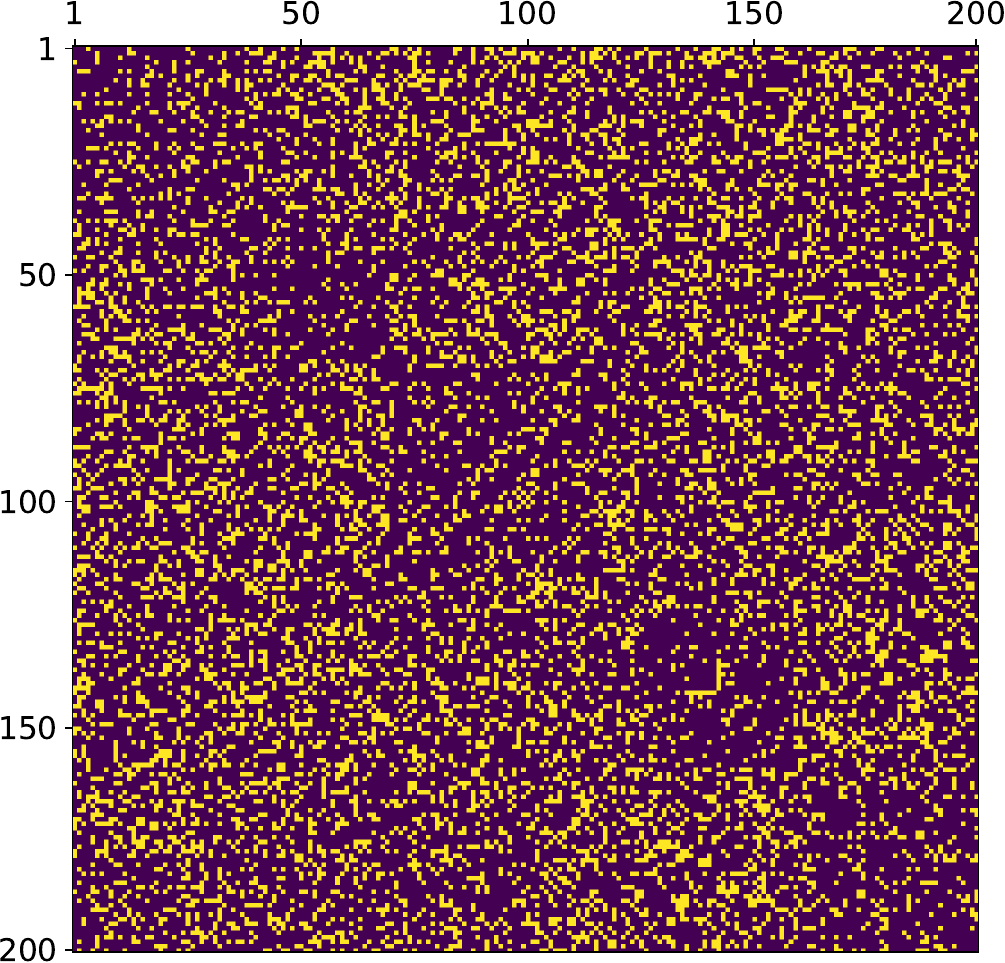}&
		\includegraphics[width=\FigWidth\linewidth]{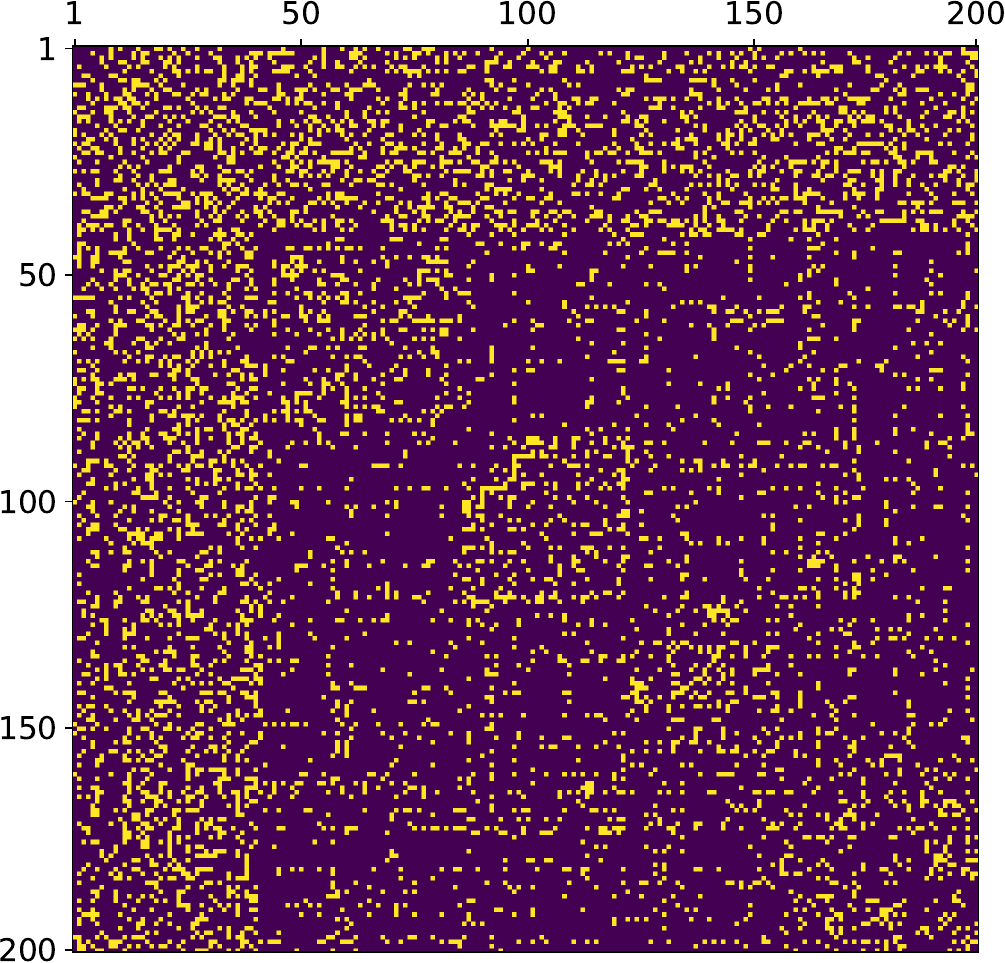}\\
		\rotatebox{90}{\hspace*{\spacing cm}$\zeta = 0.8$} & 
		\includegraphics[width=\FigWidth\linewidth]{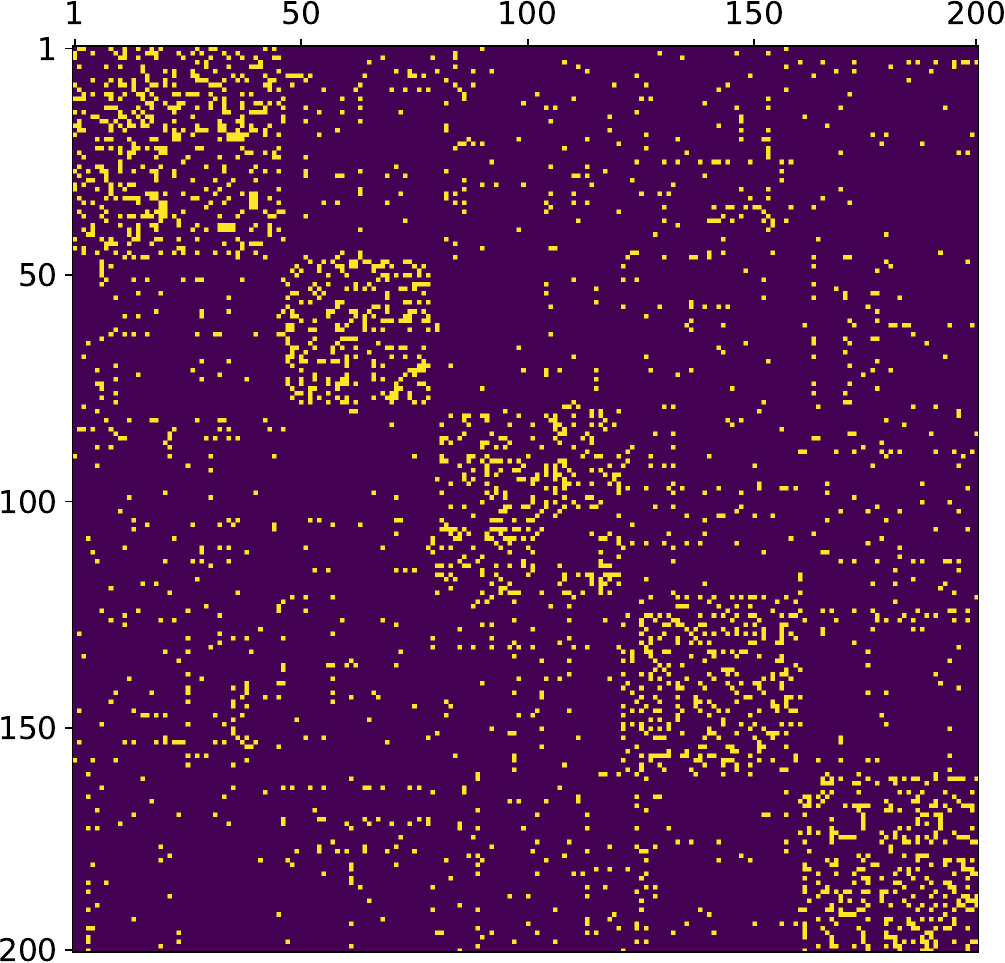}&
		\includegraphics[width=\FigWidth\linewidth]{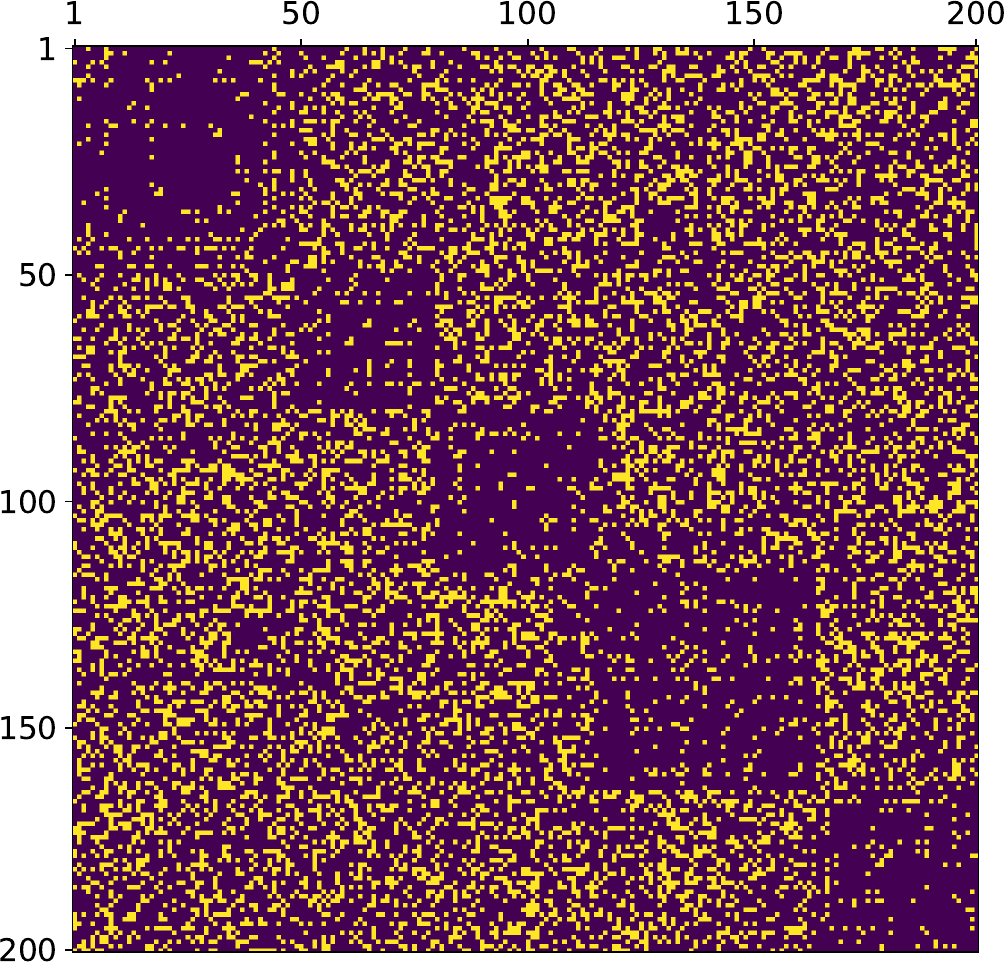}&
		\includegraphics[width=\FigWidth\linewidth]{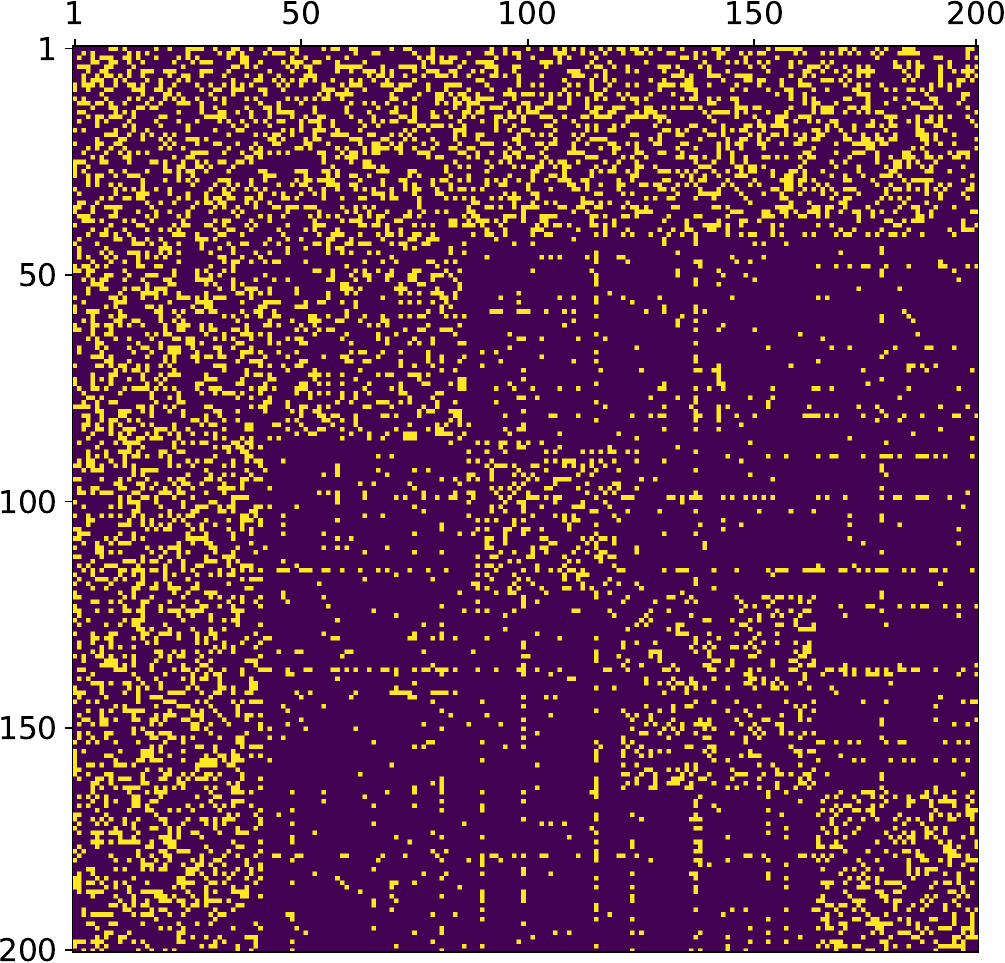}\\	
	\end{tabular}
	\captionof{figure}{Adjacency matrix sampled according to the partial membership setting, as described in Section 5 for different values of $\zeta$. A yellow (purple respectively) pixel at coordinate $(i,j)$ indicates the existence (the absence) of an edge between nodes $i$ and $j$ with $\beta$ equal to $0.3$.}
	\label{fig:appendix_adjacency_partial_memberships_details}
\end{table}

\subsection{Initialisation of the encoder}\label{sec:encoder_init}
VAEs are known to suffer from component collapsing. The generative model may enforce a regularisation preventing the model from learning any signal from the data. Strategies have been proposed to overcome this issue \parencite{higgins2017betavae}. Since the matrix $\bfPi$ has to be initialised as well as the encoder parameters, we propose another strategy consisting of obtaining a first estimation of communities in the network by running a K-Means algorithm \parencite{lloyd1982least} on the adjacency matrix. Given the obtained clusters, we run the encoder and minimise the mean-square error with the $\ell_2$-distance between  $(\mu_{\phi}(\bfA)_i)_i$ and the estimation obtained by the K-Means algorithm. We also aim at obtaining low variational variances $(\ln \sigma_{\phi}(\bfA)_i)_i$ by minimising the same mean-squared error between the $(\ln \sigma_{\phi}(\bfA)_i)_i$ and $0.01$. Other values have been tested in practice with negligible impact in all conducted experiments. The optimisation procedure as well as the initialisation is summarised in Algorithm 1.

\section{Synthetic data}

\subsection{Benchmark evaluating the clustering performances}
This section details the choice of the parameters concerning the competitors in the benchmark of the clustering performance. We evaluate the model against the adversarially regularised variational graph autoencoder \parencite[ARVGA,][]{pan2018adversarially}, with a $32$-dimensional hidden layer as well as for the latent space, and a $64$ dimensional hidden layer for the decoder. In addition, we provide the results of the variational graph autoencoder \parencite[VGAE,][]{kipf2016variational} with a $30$-dimensional hidden layer and a $4$-dimensional latent space, which resulted in better clustering than with higher dimensions. The results of the deep latent position cluster model \parencite[DLPM,][]{liang2022deep}, with a $64$-dimensional hidden layer and a $16$-dimensional latent space are also presented, as well as the results of the stochastic block model \parencite[SBM,][]{holland1983stochastic, daudin2008mixture}, with  random initialisation, denoted SBM random for short Table 1, and K-Means initialisation, denoted SBM kmeans in Table 1, and the variational Bayes latent position cluster model \parencite[VBLPCM,][]{salter2013variational}. The latent spaces dimensions were chosen as the ones providing the best results on the ARI task with $\beta$ set to $0.3$. Moreover, for the methodologies that do not perform node clustering, namely the VGAE and the ARGVA, a K-Means algorithm is fitted on the estimated posterior node embeddings with the true number of clusters. All methodologies are estimated using the true number of clusters, and the results are reported in Table 1. The best results is coloured in \red{red}, the second best in \blue{blue} and the third one in \green{green}. When two results are equal up to the standard deviation, they are identically coloured and if no signal is recovered, i.e with an ARI too low, no colouration is used.

\subsection{Evaluation as a partial memberships model}
\Cref{fig:partial_membership} is a larger version of the Figure 3 from the main paper.	
\begin{figure}
	\centering
	\includegraphics[width=0.7\linewidth]{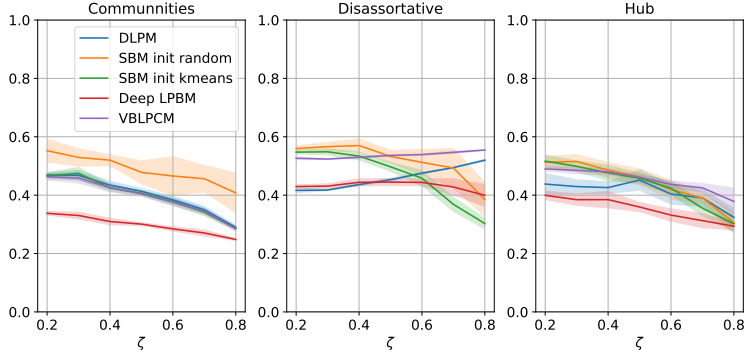}
	\caption{The partial memberships evaluation is obtained by plotting the results of the $\bfH$ metric, for different values of $\zeta$. The lower the quantity is, the better the estimation of $\bfeta$ is.}
	\label{fig:partial_membership}
\end{figure}

\section{French political blogosphere}

\subsection{Deep LPBM}
\Cref{fig:cropped-modelselectionaics} displays the different AIC value obtain for $Q$ varying from $2$ to $15$, with a highest value reached for $Q$ equal to $8$.
\begin{figure}
	\centering
	\includegraphics[width= 0.4\linewidth]{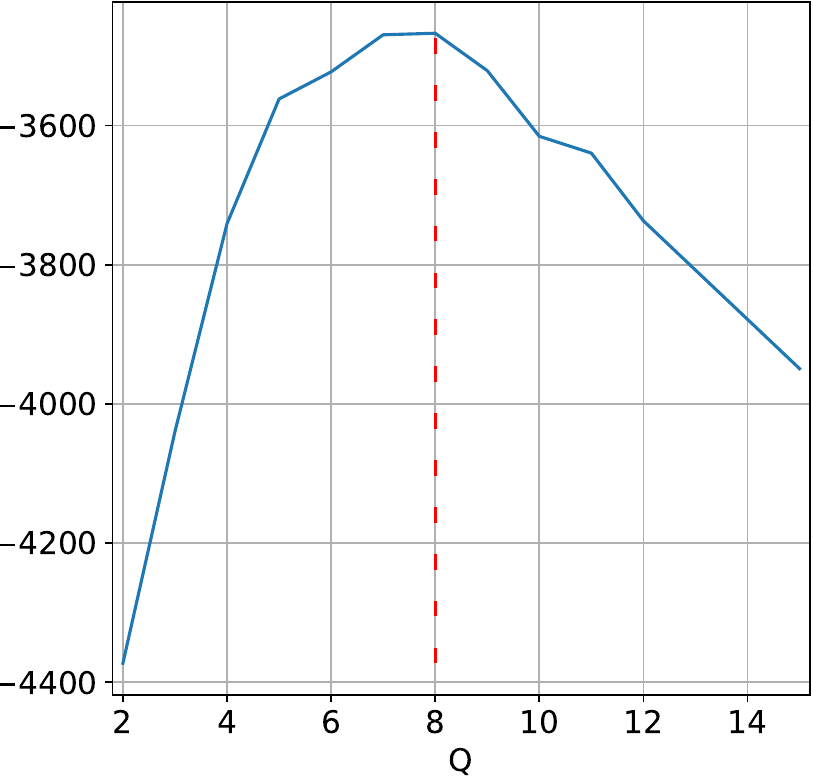}
	\caption{AIC values of Deep LPBM for Q varying from $2$ to $15$.}
	\label{fig:cropped-modelselectionaics}	
\end{figure}

Deep LPBM visualisation of the entier network is displayed in \Cref{fig:blog_network_visualisation_big}. In particular, the pie charts used for each node account for the corresponding node partial memberships.
\begin{figure}
	\centering
	\subcaptionbox{Political parties.\label{fig:blog_network_visualisation_political_parties}}[0.89\linewidth]{	\includegraphics[width=0.7\linewidth]{True_labels_with_DeepLPBM_latent_positions_layout_mu_z_permuted}}
	\vfill
	\subcaptionbox{Deep LPBM partial memberships.		\label{fig:blog_network_visualisation_dlpbm}}[0.7\linewidth]{\includegraphics[width=\linewidth]{DeepLPBM_partial_membership_visualisation_layout_mu_z_permuted}}
	\caption{All node positions were estimated by Deep LPBM. On the right-hand side, the node colours indicate the political party associated to the blog. On the left-hand side, each node is depicted by a pie chart representing the partial memberships estimated by Deep LPBM.}
	\label{fig:blog_network_visualisation_big}
\end{figure}

\subsection{Comparison with SBM results}\label{sec:real_data:sbm}

\begin{figure}
	\centering
	\begin{subfigure}{0.7\linewidth}
		\includegraphics[width=\linewidth]{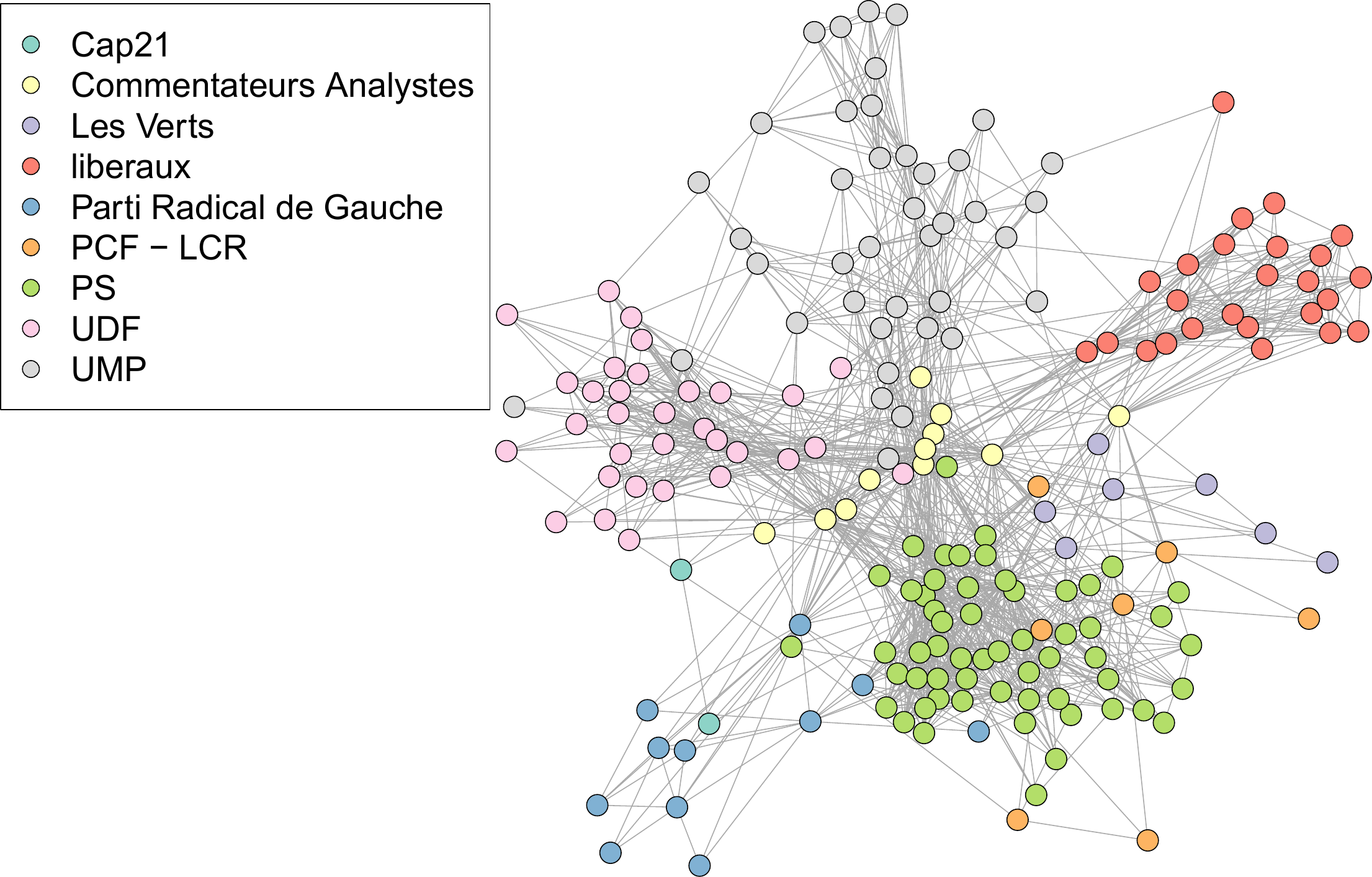}
		\caption{Political parties as node colours}
		\label{fig:blog_network_visualisation_true_parties}
	\end{subfigure}
	\vfill
	\begin{subfigure}{0.7\linewidth}
		\includegraphics[width=\linewidth]{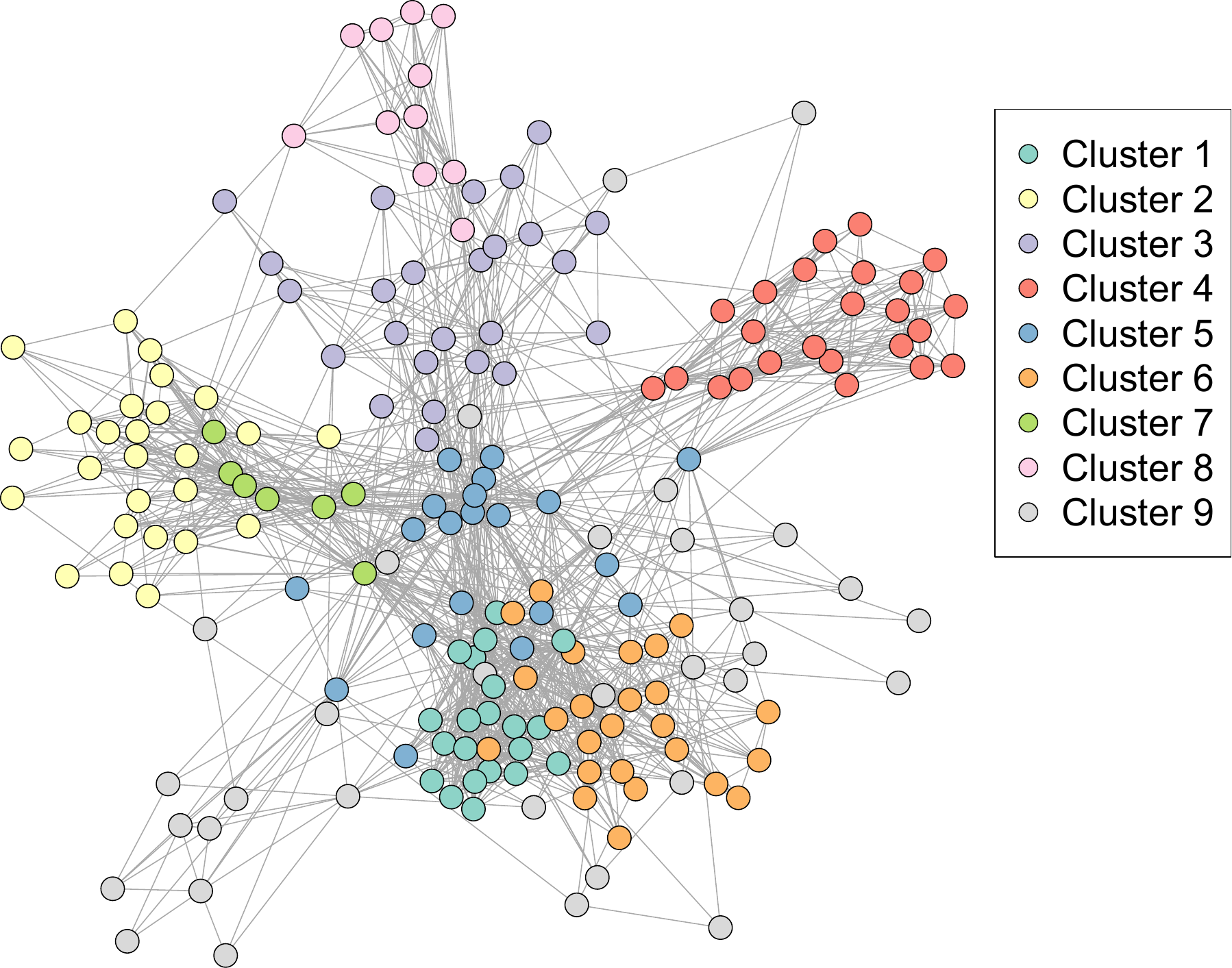}
		\caption{Clusters estimated by SBM.}
		\label{fig:blog_network_visualisation_SBM_clusters}		
	\end{subfigure}
	\caption{The node positions were computed using a Fruchterman Reingold algorithm \parencite{fruchterman1991graph}. On the left-hand side, the colour of the nodes corresponds to the political party the blog are associated with. On the right-hand side, the colour of the nodes indicate the SBM cluster assignments.}
	\label{fig:blog_network_visualisation_SBM}
\end{figure}
The section aims at presenting SBM results and comparing them to those obtained with Deep LPBM. In particular, we shall stress the differences due to the estimation of cluster memberships rather than partial memberships. As for Deep LPBM, we first estimated the best number of clusters using ICL for $Q$ varying from $2$ to $15$. The best number of clusters selected is $9$, as shown in  \Cref{fig:model_selection_sbm_blog} presented in the appendix.

Contrary to Deep LPBM, all SBM modelling assumptions are made at the cluster level. Hence, the connections between nodes are entirely deduced from their cluster membership assignments and the connectivity matrix $\bfPi$. Therefore, the matrix $\bfPi$ alone, provided in \Cref{fig:estimated_pi_matrix_sbm}, with the node cluster memberships suffices to determine the connectivity patterns beyond this graph generation. As for Deep LPBM, a ``garbage cluster'' emerges in the name of Cluster $9$, which regroups poorly connected nodes. Most of the other clusters have a probability of connection higher with nodes in the same cluster  than nodes from other clusters. However, two exceptions occur. First, Cluster $7$ is highly connected to Clusters $2$ and $5$. Second, Cluster $6$ has a high probability of connection to Cluster $1$, which corresponds to the two blog communities within the Socialist Party (PS).

A visualisation of the network is provided in \Cref{fig:blog_network_visualisation_SBM}. Since SBM does not provide a visualisation of the network, an external algorithm, namely the Fruchterman-Reingold algorithm \parencite{fruchterman1991graph}, had to be used. On the left-hand side, each node colour corresponds to the corresponding political party of the blog, while on the right-hand side, each node colour corresponds to SBM node cluster membership assignment. The first main difference with Deep LPBM is the limited compatibility between the node positions and the clustering. As an example, nodes in Cluster $9$ are spread across the entire network. Another salient difference backing the usefulness of estimating the node positions using the partial membership assignments is exposed by the UMP blogs. Indeed, SBM, as Deep LPBM, captures two strong communities within the UMP blogs and separates them into Clusters $3$ and $8$, as indicated by the $\bfPi$ matrix in \Cref{fig:estimated_pi_matrix_sbm}. However, the important discrepancy between the connectivity patterns of the two clusters is not translated on the visualisation in \Cref{fig:blog_network_visualisation_SBM_clusters} and is not made clear by the Fruchterman-Reingold algorithm, while captured by SBM estimation. To be fair, we also provide the visualisation obtained by the Fruchterman-Reingold algorithm with edge weights equal to the probability of connections of the corresponding node clusters. \Cref{fig:sbmvisualisationwithweigthsinkscape} in the appendix shows that using the edge weights can correct the effect mentioned above but at the cost of other issues, such as nodes collapsing on top of each other for instance.

To end this comparison between Deep LPBM and SBM results, we insist on similarities between the two results stressing the relevance of the discovered patterns composing the blog network. First, as stated above, UMP blogs, as well as PS blogs, are both separated into two communities as well Liberals (liberaux) blogs which are gathered into a single community. Cluster $2$ of Deep LPBM results and Cluster $5$ of SBM results correspond to nodes connected to many clusters, with a behaviour close to a hub. We can note that Deep LPBM results allow us to refine this observation by obtaining which nodes are connected to which clusters. Eventually, both methodologies use an extra cluster to regroup poorly connected nodes.

\paragraph{Fruchterman-Reingold with edge weights}
\begin{figure}[H]
	\centering
	\includegraphics[width=0.4\linewidth]{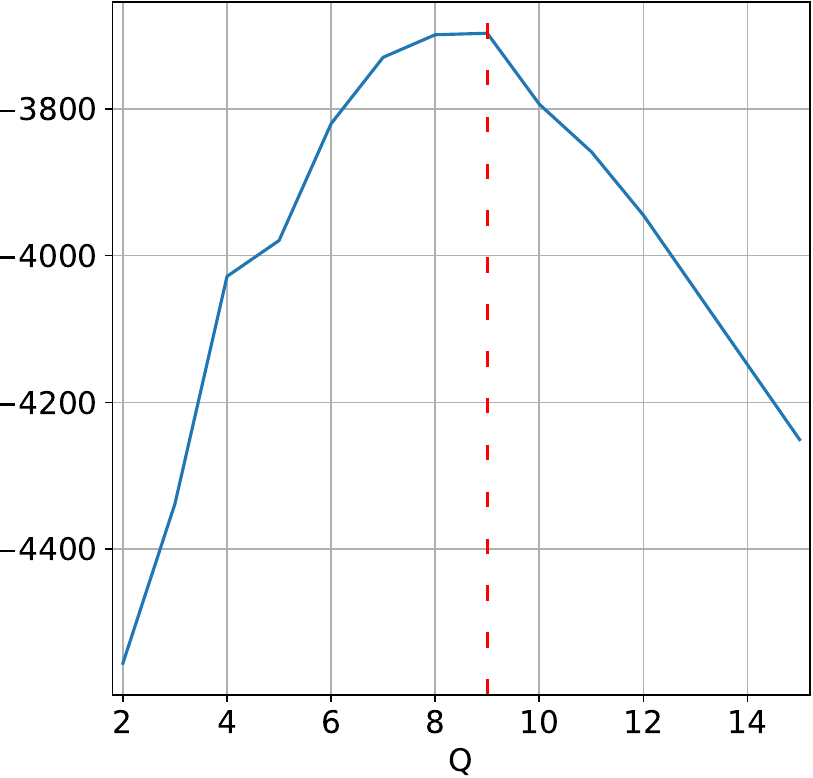}
	\caption{ICL of SBM for different number of clusters $Q$. The selected number of cluster is $9$.}
	\label{fig:model_selection_sbm_blog}
\end{figure}
In \Cref{fig:sbmvisualisationwithweigthsinkscape}, the node positions are obtained using a Fruchterman-Reingold algorithm with edge weights corresponding to the corresponding probability of connection between nodes. While this allows to obtain a better visualisation of the estimated communities, and in particular Clusters $3$ and $8$, it make the nodes in the middle indistinguishable. Indeed, to be able to obtain this figure, it was necessary to increase the size of the figure as well as decrease the node size, because nodes have collapsed into a single cluster. 

In addition, \Cref{fig:estimated_pi_matrix_sbm} provides the estimation of SBM $\bfPi$ matrix. This matrix shows the communities emerging the analysis, as well as the structure of blogs affiliated to the same political party. This is described in more detail in \Cref{sec:real_data:sbm}.
\begin{figure}
	\centering
	\includegraphics[width=0.4\linewidth]{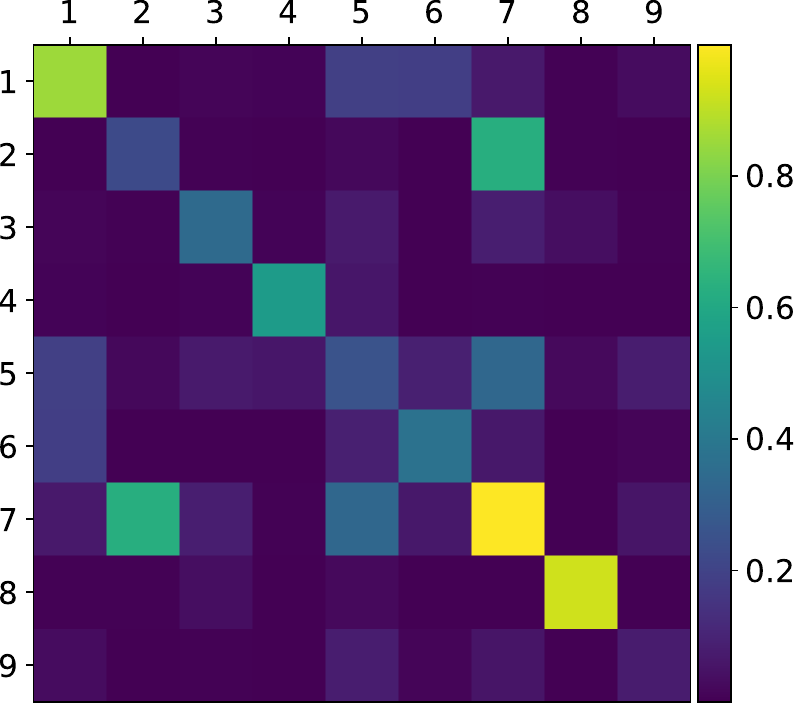}
	\caption{$\bfPi$ matrix estimated with SBM.}
	\label{fig:estimated_pi_matrix_sbm}
\end{figure}

The node positions in \Cref{fig:sbmvisualisationwithweigthsinkscape} were obtained by using the Fruchterman-Reingold algorithm with edge attribute equal to the term $\bfPi_{qr}$ with $q$ and $r$ the corresponding node cluster membership assignments. 
\begin{figure}
	\centering
	\includegraphics[width=0.7\linewidth]{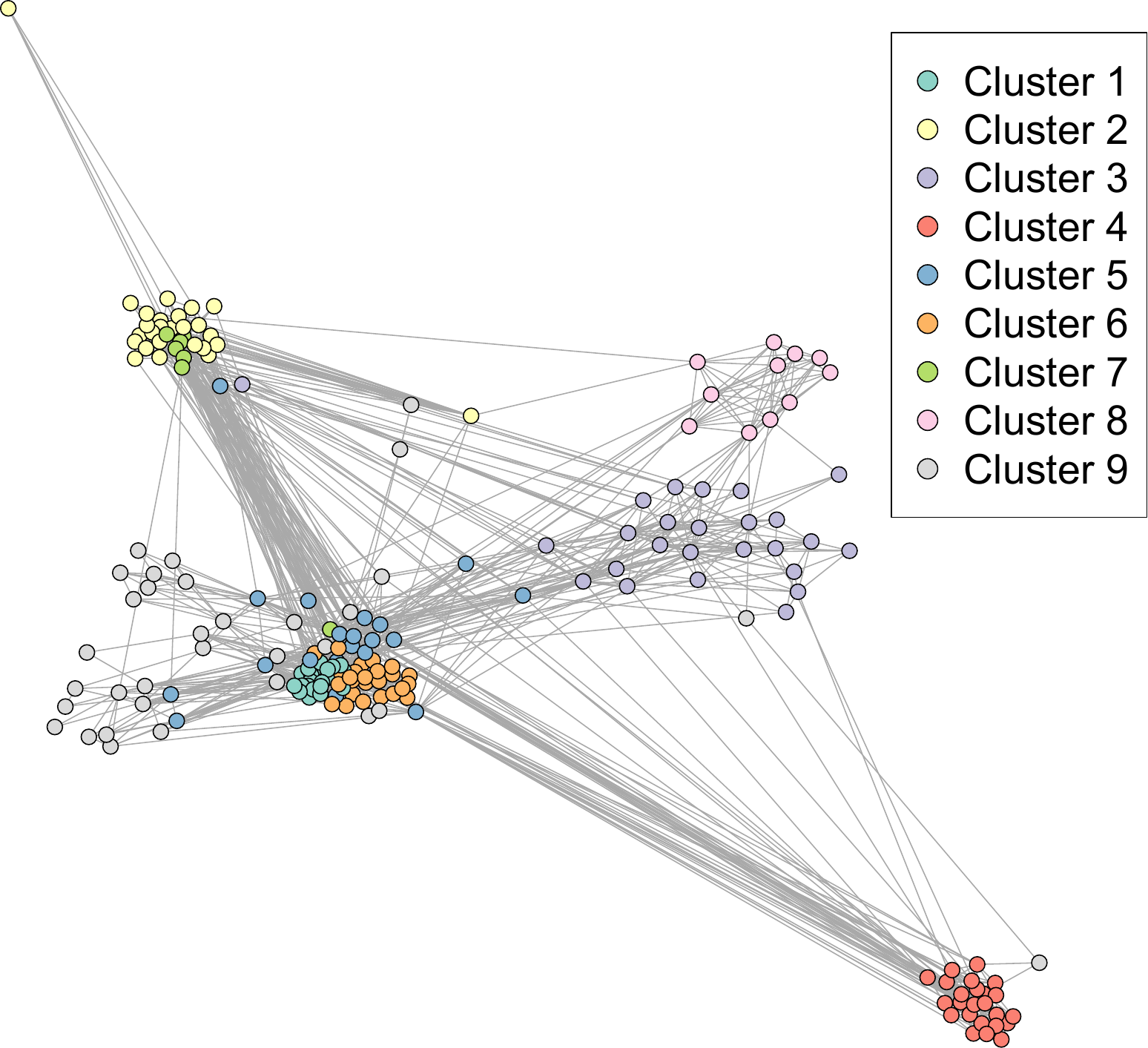}
	\caption{Visualisation of the network with Fruchterman-Reingold algorithm using the connectivity between groups as edge weights. The node colours correspond to the node cluster membership assignments.}
	\label{fig:sbmvisualisationwithweigthsinkscape}
\end{figure}

\end{document}

\typeout{get arXiv to do 4 passes: Label(s) may have changed. Rerun}